\definecolor{light-gray}{gray}{0.968}
\newtheorem{theorem}{Theorem}
\newcommand{\newtxt}[1]{#1}
\newcommand{\oldtxt}[1]{}
\begin{document}
\ifthenelse{\equal{\template}{ccs}\OR{}\equal{\template}{pets}\OR{}\equal{\template}{sosp}}{%
    \begin{abstract}{Merkle hash trees are the standard method to protect the integrity and freshness of stored data. However, hash trees introduce additional compute and I/O costs on the I/O critical path, and prior efforts have not fully characterized these costs. In this paper, we quantify performance overheads of storage-level hash trees in realistic settings.  We then design an optimized tree structure called \textit{Dynamic Merkle Trees (DMTs)} based on an analysis of root causes of overheads. DMTs exploit patterns in workloads to deliver up to a $2.2\times$ throughput and latency improvement over the state of the art. Our novel approach provides a promising new direction to achieve integrity guarantees in storage efficiently and at scale.
}\end{abstract}
    \maketitle
}{%
    \maketitle
    \begin{abstract}\end{abstract}
}

\section{Introduction}
\label{introduction}

An increasing number of attacks against cloud services has fueled significant investment and research into trusted cloud storage systems: systems that provide high assurance of the confidentiality and integrity of data stored in-memory and on-disk through hardware-based access-controls and cryptographic proof systems~\cite{tsai_graphene-sgx_nodate,priebe_enclavedb_2018,arasu2021fastver}. To this end, using a Merkle hash tree~\cite{merkle1989certified} has become the state-of-the-art method to protect the integrity and \textit{freshness} of both volatile~\cite{taassori2018vault,feng2021scalable} and persistent~\cite{arasu2021fastver,chakraborti2017dm} storage. An exemplar use case is protecting disks attached to confidential virtual machines~\cite{aws_sev_snp,gcp_confidential_vm,azure_confidential_vm}.

However, hash trees introduce additional compute (hashing) and I/O (metadata fetching) costs on the I/O critical path, which can severely degrade performance. For example, consider that a 1~TB disk contains $\approx$268~M 4~KB blocks. A typical balanced, binary hash tree over the disk blocks would have a height of 28, requiring (at least) 28 hashes to be computed on every read or write. The total cost of fetching metadata and verifying/updating hashes can exceed several hundred $\mu s$, dwarfing the baseline latency of performing a data access on a high-performance storage device (which can be $<60~\mu s$).

Prior works have studied this phenomenon, primarily in the context of secure volatile memory~\cite{taassori2018vault,yan2006improving,gassend2003caches,feng2021scalable}. However, their performance implications in the context of (cloud) block storage at large, and low-latency storage devices in particular, remain largely unknown. The key difference is that storage devices are subject to vastly different workload characteristics, capacities, and cache behaviors than memory devices.

In this paper, we take a first-principles approach to analyzing hash tree performance in the context of cloud block storage. First, we demonstrate that state-of-the-art hash tree designs incur significant overheads and fail to reliably scale to large disk capacities. We then demonstrate that hashing (CPU) costs are the primary performance bottleneck. Next, we address the challenge of reducing hashing costs by posing the fundamental question: \textit{How can we model an optimal hash tree for cloud block storage?} 
We show that the problem of finding an optimal hash tree can be reduced to the problem of finding an optimal prefix tree in the context of lossless data compression~\cite{huffman1952method}. More specifically, by constructing a hash tree as an optimal prefix code, we can produce a hash tree that achieves optimal throughput under a known workload profile.

Building on our observations of optimal trees, we then develop an \textit{online} solution that can approximate an optimal tree (without a priori knowledge) by learning and adapting to workload patterns on-the-fly. It is known that real-world workloads are characterized by skewed access patterns (i.e., where a small number of blocks are accessed much more frequently than others) across all layers of the memory hierarchy~\cite{li2023depth,wang2023locality,cooper2010benchmarking,yang2016write,arasu2021fastver}. In an offline setting, this manifests as optimal hash trees often being far from balanced---where frequently accessed blocks have shorter verification/update paths in the tree than infrequently accessed blocks. Towards this, we introduce a novel dynamic, unbalanced hash tree design called \textit{Dynamic Merkle Trees (DMTs)}. DMTs are based on the splay trees commonly used in garbage collection and IP routing~\cite{sleator1985self}, and they self-adjust at runtime to reduce hashing costs for frequently accessed data.

We implemented DMTs and evaluated them in a real cloud setting with AWS EC2 instances and NVMe devices. We performed a broad performance analysis across a range of system and workload settings, parameterized by disk capacity, read/write ratio, I/O size, etc. Using a set of Zipfian workloads, an Alibaba dataset, and a Filebench OLTP workload, we show that the static nature of state-of-the-art approaches becomes prohibitive: they deliver less than 50\% of optimal throughput on average across all experiments. In contrast, DMTs capitalize on skewed access patterns, delivering >85\% of optimal throughput and up to a $2.2\times$ throughput and latency improvement over the state of the art.

We conclude that for cloud block storage, balanced trees are ill-suited as the base construction of a hash tree, and DMTs are a preferable alternative. DMTs provide a new foundation in the search for integrity mechanisms that \newtxt{operate efficiently at scale. Our code is plug-and-play into Linux and open-sourced at \href{https://github.com/MadSP-McDaniel/dmt}{https://github.com/MadSP-McDaniel/dmt}.} \oldtxt{that can operate efficiently at scale. Our code is plug-and-play into standard Linux systems and open-sourced at [Anonymized link].}

\section{Background}
\label{sec:bg}

\shortsection{Cloud Block Storage}
Block storage is a backbone of modern public cloud infrastructure~\cite{aws-ebs,gc-pdisk,azure-mdisk}. While there are various deployment models for cloud applications and storage, we consider a standard Infrastructure-as-a-Service (IaaS) deployment where an application runs inside of a guest VM and reads and writes to a fast, local NVMe disk attached to the VM (\autoref{fig:system-model}). The application may be end-user facing (e.g., a web server) or the last hop in a networked storage system (e.g., a file server for other cloud-hosted applications).

\begin{figure}[t]
    \centering
    \includegraphics[width=0.9\linewidth]{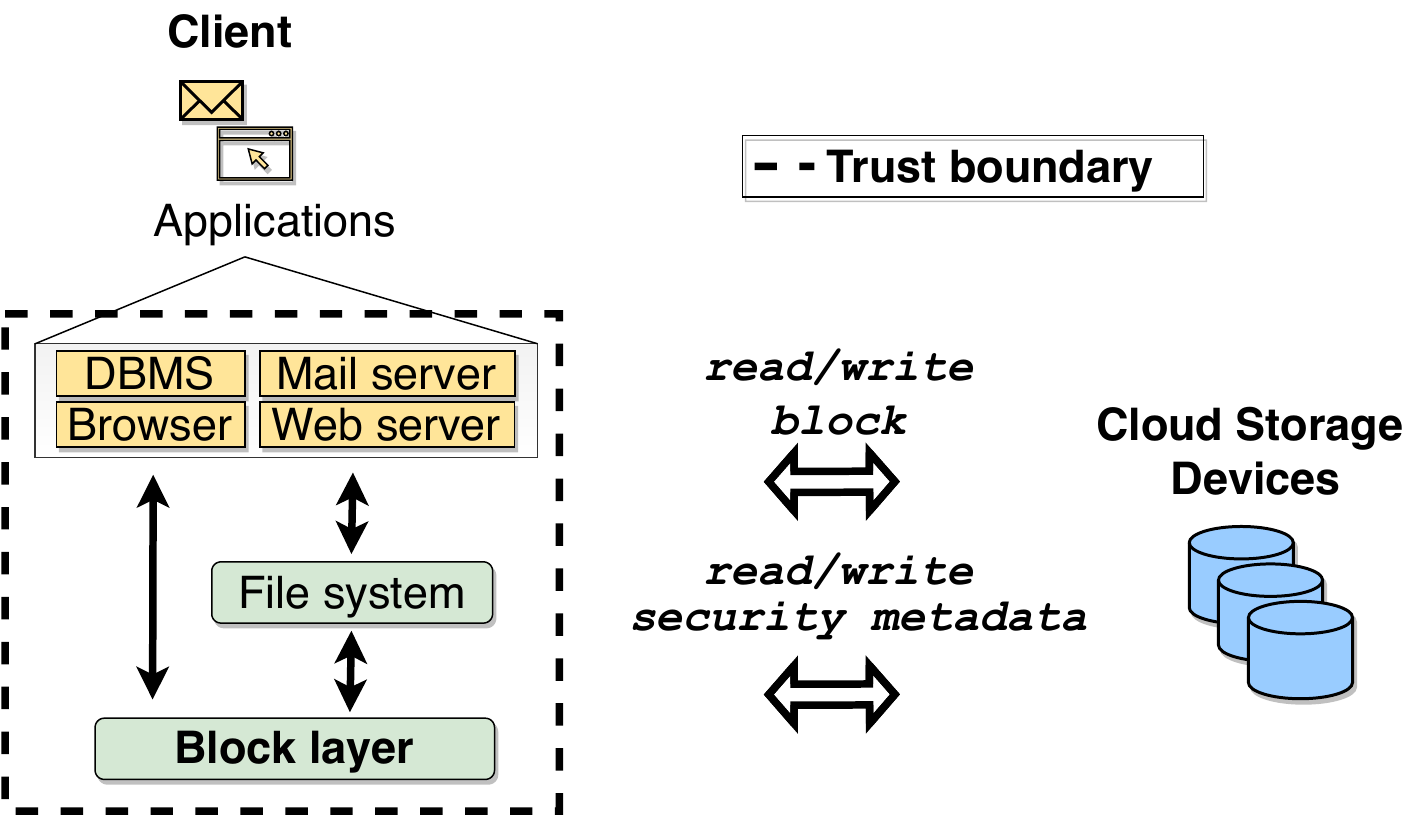}
    \caption{We assume that VM memory contents are trusted and cloud storage devices are untrusted; VM memory can be protected with trusted execution primitives~\cite{aws_sev_snp}.}
    \label{fig:system-model}
\end{figure}

\shortsection{Merkle Hash Trees} 
Merkle hash trees are the state-of-the-art method to protect the integrity and freshness of arbitrary datasets---largely due to their proven theoretical efficiency~\cite{merkle1989certified,gassend2003caches,mckeen_innovative_2013,tsai_graphene-sgx_nodate,priebe_sgx-lkl_2020}. They have played a pivotal role in ensuring boot disk integrity with Linux \textit{dm-verity}~\cite{android-dm-verity}---where they are implemented as a custom device driver that intercepts I/Os and implements the hash tree logic.

As shown in~\autoref{fig:merkle-hash-tree}, a Merkle hash tree (or more simply, a hash tree) is typically a balanced binary tree, with each node in the tree containing a hash value. A leaf node contains the hash (MAC) of a data block (and a cipher IV when encrypting data), and an internal node contains the hash of the concatenation of the hashes of its two children. Internal node hashes are iteratively computed from leaf to root. The root hash \textit{authenticates} the current contents of the storage device and is typically stored in a secure location (e.g., a persistent on-chip register or a TPM~\cite{perez2006vtpm,taassori2018vault}). All other nodes in the tree are stored on disk alongside the data. The number of leaf nodes in the tree $n$ is equal to the number of blocks on the storage device, and the total number of tree nodes is $2n-1$. 

\begin{figure}[t]
    \centering
    \includegraphics[width=0.9\linewidth]{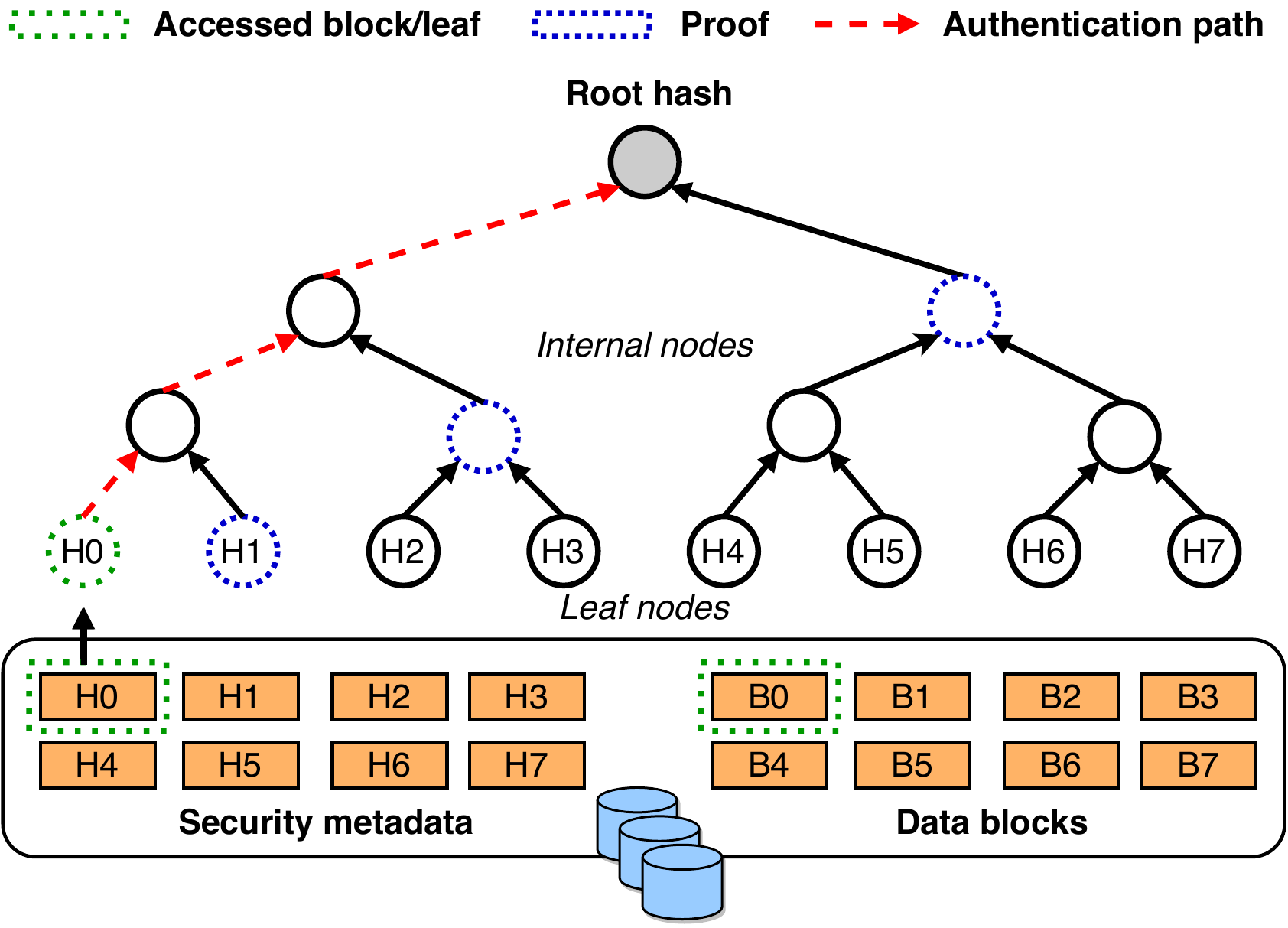}
    \caption{A Merkle hash tree protects the integrity and freshness of data read from/written to a storage device.}
    \label{fig:merkle-hash-tree}
\end{figure}

There are two primitive operations on a hash tree: \textit{verification} and \textit{update}. When a block is read, it must be verified against the root hash. The client's block layer first fetches the (encrypted) block data, MAC, and cipher IV from disk. It checks that the retrieved MAC is consistent with the retrieved block data by rehashing the data and comparing. It then fetches the proof of authenticity, a set of sibling hashes along the path from the accessed leaf to the root (see nodes highlighted in blue). The retrieved MAC is inserted into the tree at the appropriate leaf position, and parent hashes are iteratively computed along the \textit{authentication path} using the sibling hashes (see red arrows). The computed root hash is then compared against the known root hash. If the two hashes match, verification succeeds. When a block is written, a new hash must be computed (\texttt{H0}) and the hash tree updated (\texttt{H0} ancestors). Updates are handled similar to verification, with a new root hash computed and saved to the secure location.

Caching hashes in secure memory (i.e., in a protected memory region) is also a standard hash tree optimization~\cite{gassend2003caches,arasu2021fastver}. Caching reduces I/O costs associated with fetching hashes during a verification or update. It also enables early returns when fetching and verifying hashes; cached hashes were already authenticated, so any hash that has been tampered with will result in a failed check at some level in the tree.

\section{Security Model}
\label{sec:secmodel}

\textit{Data-only} attacks---attacks based on maliciously crafted data rather than control flow hijacking---have been recently shown to present a significant threat to modern applications~\cite{johannesmeyer2024practical,kurmus2017random,galanou2023trustworthy,bohling2020subverting,hu2016data}. Any attack that can be launched from (untrusted) storage would fundamentally upend the security guarantees provided by the rest of the system. Below we describe these attacks and outline security requirements to mitigate them.

\shortsection{Trust Model} 
We assume all VM contents (code and data) are trusted and the storage devices are untrusted. VM memory contents can be protected with hardware-based isolation primitives such as AMD SEV-SNP~\cite{aws_sev_snp}. The trusted and untrusted components therefore have a simple block read/write interface (\autoref{fig:system-model}). This models untrusted disks attached to, for example, confidential virtual machines~\cite{aws_sev_snp,gcp_confidential_vm,azure_confidential_vm}. 

\shortsection{Threat Model} 
We consider a privileged attacker who has access to the hypervisor or storage backbone in a public cloud datacenter~\cite{tsai_graphene-sgx_nodate,arnautov_scone_nodate}. This could be a malicious co-tenant who was able to escalate privilege, or a malicious cloud administrator. The attacker has the ability to access, corrupt, swap, drop, record, inject, or replay any data across the storage backbone. 

\textit{Example attacks.}
Armed with the capability to inject arbitrary data into the storage interface, the attacker could replay old data to the VM~\cite{kurmus2017random,johannesmeyer2024practical}. Data would bubble up the call stack and either cause the VM to deliver old data to applications, or cause an outdated version of a binary to be read from disk and executed~\cite{johannesmeyer2024practical,galanou2023trustworthy,bohling2020subverting}. Similarly, consider an ext4 file system formatted on top of the disk. An attacker could arbitrarily replay inode table blocks and cause the VM OS to recognize an invalid set of permissions on a file, enabling unauthorized access to the file. Checksums or keyed hashes alone cannot prevent these data-only attacks: the received data would still pass verification.

\shortsection{Security Requirements} 
Ensuring the safety of user data and correct execution of applications therefore requires three security properties for storage: authenticity, uniqueness, and freshness~\cite{avanzi2022cryptographic}. Keyed encryption and MACs can ensure confidentiality, authenticity (prevents corruptions), and uniqueness (prevents relocation attacks). Merkle hash trees then ensure data freshness~\cite{gassend2003caches,arasu2021fastver,angel2023nimble,matetic2017rote}: the root hash reflects the current \textit{version} of the storage device, so a replay attack would require changing the root hash, which is stored in a secure location and is out of control of the attacker~\footnote{To improve performance, some prior works have loosened security requirements by permitting lazy verification~\cite{arasu2021fastver}. However, this violates freshness guarantees; we therefore do not consider lazy verification in our analysis.}.

\section{Motivation}
\label{sec:motivation}
Though hash trees have played a pivotal role in ensuring boot disk integrity with Linux \textit{dm-verity}~\cite{android-dm-verity}, their performance implications in the context of (cloud) block storage at large, and low-latency storage devices in particular, are largely unknown. In fact, prior works have identified that hash tree overheads can severely degrade performance for secure memory systems, and optimizations abound~\cite{taassori2018vault}. This raises the natural question of whether storage-level hash trees observe similar costs. Our goal in this paper is to quantify this effect and design optimizations to reduce overheads if so.

\begin{figure}[!t]
    \centering
    \includegraphics[width=0.475\textwidth]{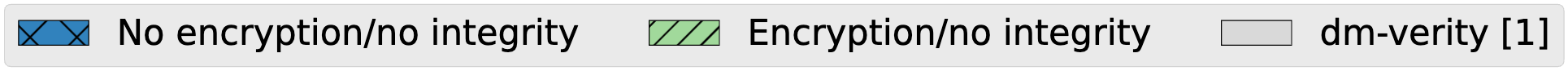}\\
    \includegraphics[width=0.45\textwidth]{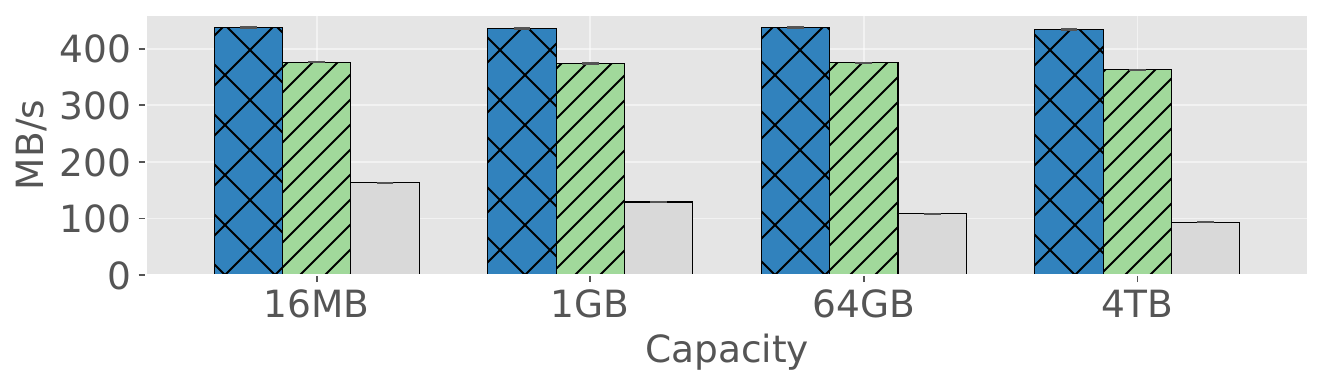}\\
    \caption{This graph shows how throughput decreases w.r.t. capacity under an exemplar setup and workload. Experiment parameters: Workload: Zipf(2.5), Read ratio: 1\%, I/O size: 32~KB, Cache size: 10\%.}
    \label{fig:bottleneck11}
\end{figure}
\begin{figure}[!t]
    \centering
    \includegraphics[width=0.45\textwidth]{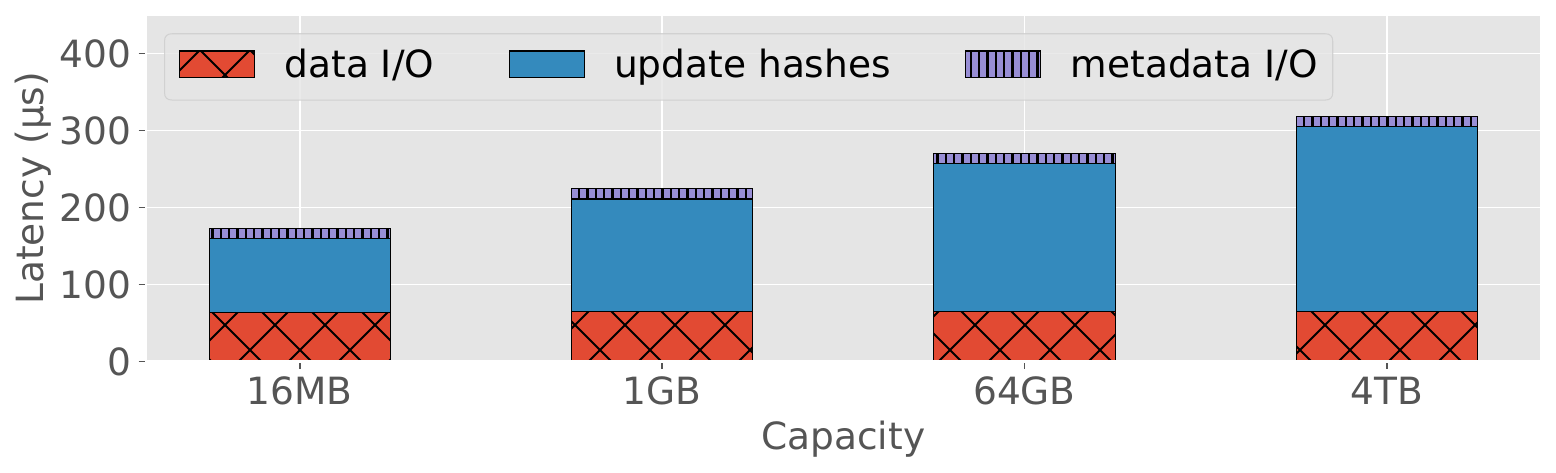}\\
    \caption{CPU vs. I/O time during the driver \texttt{write} routine. Same experiment parameters as above.}
    \label{fig:bottleneck12}
\end{figure}

\shortsection{Scalability Problem}
We begin with a motivating experiment in~\autoref{fig:bottleneck11}, which demonstrates the performance of the state-of-the-art hash tree design used by dm-verity---a balanced, binary tree. The graph shows how throughput changes as disk capacity increases. We defer implementation and experiment setup details to~\autoref{sec:exp-setup}, but note that the hash tree is implemented in a block device driver that wraps a lower-level driver, and is exposed as a regular device to file systems or other applications as \texttt{/dev/XXX}.

The graph shows that throughput decreases w.r.t. capacity. This is due to the tree size (height) increasing logarithmically with capacity, which is reflected in logarithmically increasing slowdowns. At 16~MB capacity, the hash tree incurs nearly a 60\% throughput loss over the Encryption/no integrity baseline. At 4~TB capacity, throughput loss increases to 75\%. Note that the workload shape is immaterial here; the same overheads are always observed because of the tree structure \newtxt{(i.e., all accesses will require computing the same number of hashes)}. \oldtxt{Further, note that read-heavy workloads are generally not an issue,}

\newtxt{Read-heavy workloads do not pose significant challenges,} because the (small) hash cache is very efficient (hit rate >99\%), and verifies benefit from early exits when they hit a cached hash. The problem is how to efficiently handle writes: under write-heavy workloads, hash tree overheads are prohibitive at small and large capacities, undermining the performance capability of the fast NVMe device\footnote{\newtxt{Note that in some cases the page cache can hide disk-level performance impacts, but the performance of large-scale applications will depend heavily on the underlying disk performance (due to writeback contention and throttling under heavy memory pressure~\cite{corbet2010block}). Our focus is thus on analyzing and optimizing the storage layer to ensure that the performance capabilities of fast NVMe devices are not undermined by hash tree operations.}}.

\shortsection{Root Cause Analysis}
\autoref{fig:bottleneck12} shows the latency breakdown during the device driver \texttt{write} routine. As expected, for a 32~KB I/O the time spent pushing data out to disk (data I/O) is approximately $60~\mu s$. The remaining time in the write routine is spent fetching/writing hashes to disk (metadata I/O) and performing hash updates (computing the new block hash and executing the hash tree update). Metadata I/O is negligible because the hash cache is very efficient. The majority of time is therefore attributed to managing the hash tree\footnote{\newtxt{Note that our focus is on NVMe SSDs; HDDs have a different performance profile that we are not optimizing for. With HDDs, the data access time dominates hashing time, so hash tree overheads can be relatively negligible.}}.

To understand why this occurs, \autoref{fig:bottleneck22} shows the latency to compute a SHA256 hash (the standard hash function used in Merkle hash trees) vs. data size on a 2.9GHz Intel Xeon Platinum 8375C, a 3rd Generation Intel Xeon Scalable processor supporting AES and SHA instruction-set extensions to accelerate cryptographic operations. We observe that it takes approximately $490~ns$ to compute the hash of 64~B of data. We also measure the latency to encrypt and generate the MAC for a 4~KB block with AES GCM to be approximately $2~\mu s$.

\begin{figure}[!t]
    \centering
    \includegraphics[width=0.45\textwidth]{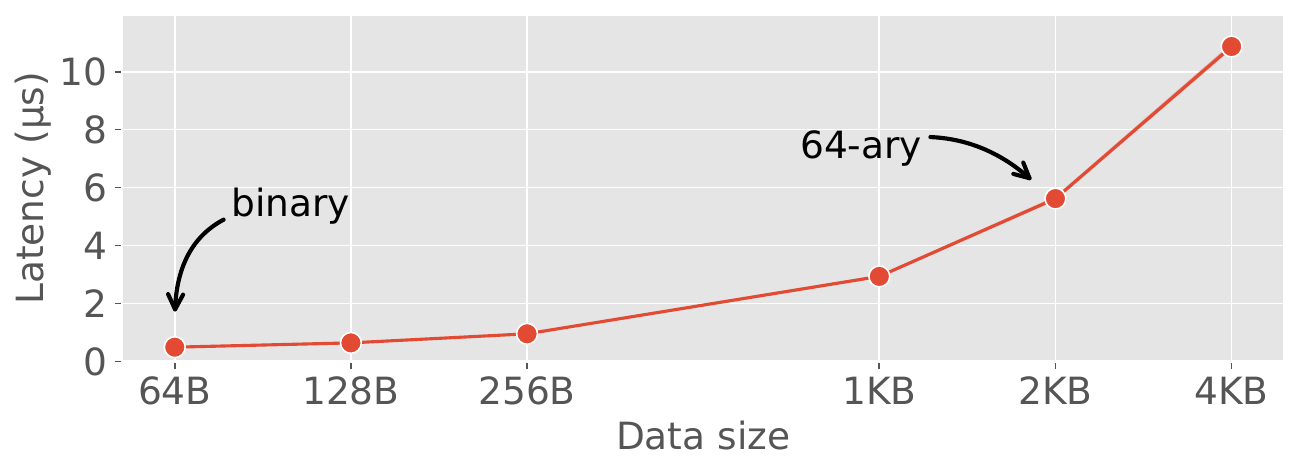}\\
    \caption{This graph shows the latency of computing SHA256 hashes on a modern processor with hardware acceleration for cryptographic functions. The annotations highlight the input data size to the hash function at different tree arities.}
    \label{fig:bottleneck22}
\end{figure}

\autoref{fig:bottleneck12} shows that at 1~GB capacity, approximately $150~\mu s$ is spent managing the hash tree. Consider that a 1~GB disk has 262,144 4~KB blocks and thus a height of 18, requiring one SHA256 computation per level. Further, with 4~KB disk blocks, executing a 32~KB write I/O would require $32768/4096=8$ hash tree updates executed sequentially---best-known methods still rely on a global tree lock to serialize tree updates. This amounts to $150~\mu s/8=18.75~\mu s$ spent encrypting data, generating the MAC, and updating the hash tree. Thus, we have $18.75-2=16.75~\mu s$ time spent doing the actual hash tree update, and $16.75/18=0.93~\mu s$ total time spent doing work at each level in the tree. Most of this time is spent computing the node hash, with the remaining work being cache lookups and buffer copying.

Note that with even faster devices in the future (with single-digit microsecond access latencies), 
the proportion of time spent hashing vs. doing data I/O will grow substantially.

\shortsection{Optimized Tree Structures}
Fundamentally, this means that time spent hashing (\textit{CPU costs}) is the bottleneck. \autoref{sec:perf-eval} will show that caching and parallelization only help to an extent. What is needed is a structurally more efficient tree. 

Prior works optimizing hash trees for memory have largely converged on the idea that high-degree (e.g., 64-ary) trees are the solution to eliminate overheads~\cite{taassori2018vault}. The intuition is that by increasing tree fanout, one can decrease tree height and thus the number of hashes that must be computed per read or write. However,~\autoref{fig:bottleneck23} shows that high-degree trees are actually a suboptimal design choice. We compute the expected hashing costs based on the hashing latency and the height of the tree observed under a given arity for 1~GB capacity (e.g., 64-ary trees have height 3). \newtxt{While tree height is decreased, the graph shows that increased fanout results in high-degree trees incurring the highest expected hashing costs due to hashing more content, and ultimately lower performance.} \oldtxt{The graph shows that high-degree trees incur the highest hashing costs.} We will demonstrate in~\autoref{sec:perf-eval} that high-degree trees fail to reliably scale, and binary trees perform best. We therefore seek a better way to maneuver binary trees to reduce overheads.

\begin{figure}[!t]
    \centering
    \includegraphics[width=0.45\textwidth]{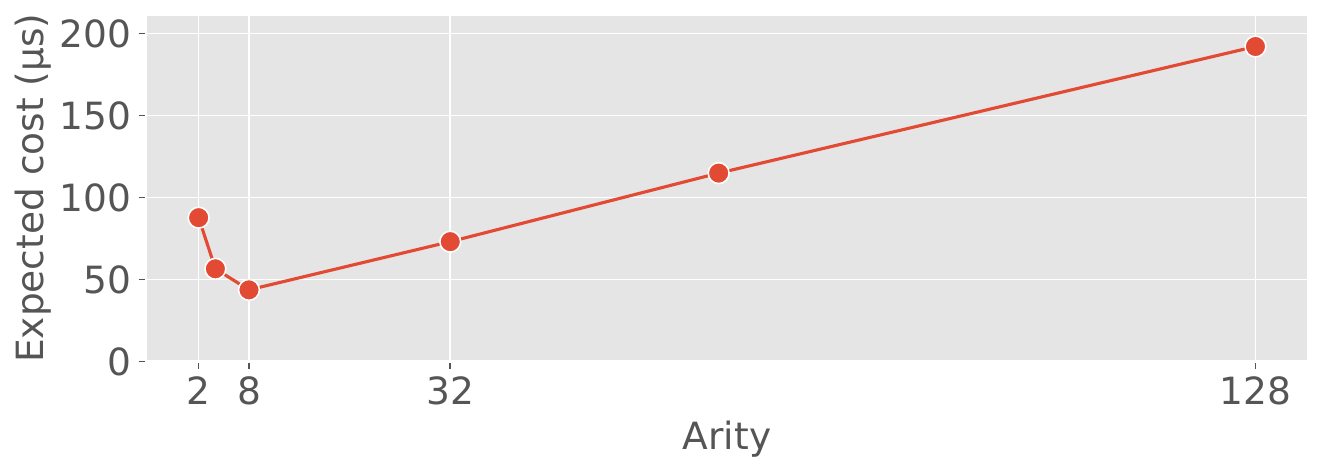}\\
    \caption{We calculate the expected hashing costs for a 32~KB write I/O based on the tree height under different tree arities, given the measured SHA256 latencies for each arity in~\autoref{fig:bottleneck22}. The graph shows that low-degree trees should have lower hashing costs than high-degree trees.}
    \label{fig:bottleneck23}
\end{figure}

\section{Optimal Hash Trees}
\label{sec:optimal}

We approach this problem by asking the fundamental question: \textit{Is there an optimal tree structure?}~\cite{keeton2007don}. Having a definition of an optimal tree serves two purposes: (1) under a specified set of assumptions, it establishes an upper bound on performance, and (2) it discloses what characteristics of the tree structure are correlated with optimal performance.

\subsection{Optimal Definition}
We previously showed that CPU costs are the bottleneck that affect device performance. Intuitively, an optimal hash tree must therefore be a tree that reduces the number of hashes that must be computed per update or verification, reducing hashing costs and therein improving performance. 

We observe that the problem of finding an optimal hash tree can be reduced to finding an optimal prefix tree (or \textit{prefix code}) in the context of lossless data compression~\cite{moffat2019huffman}. Prefix codes map a set of symbols onto a set of codewords, with the goal of compression being that codewords are as short as possible to produce a maximally compressed representation of the original data. An example is shown in~\autoref{fig:huffman}. Formally:

\begin{theorem}
    A hash tree constructed as an optimal prefix code is optimal for an i.i.d. access probability distribution.
\end{theorem}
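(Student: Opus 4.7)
The statement is essentially a translation lemma: the quantity that determines expected hashing cost on a binary hash tree is the same functional that Huffman coding provably minimizes. My plan is therefore to (i) write down an explicit cost model for a single access, (ii) take expectations under the i.i.d.\ workload, (iii) exhibit the bijection between binary hash trees over $n$ leaves and prefix codes over $n$ symbols, and (iv) invoke the known optimality of prefix codes (Huffman) to conclude.

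\textbf{Step 1: cost of one access.} Let $T$ be a binary hash tree with leaves $\{1,\dots,n\}$ in bijection with disk blocks, and let $d_i(T)$ denote the depth of leaf $i$. From Section~\ref{sec:bg}, verifying or updating block $i$ requires recomputing the hash at every ancestor of leaf $i$, i.e.\ exactly $d_i(T)$ internal-node hashes (plus the block's own leaf hash, which is independent of $T$). Because Section~\ref{sec:motivation} established that hashing time dominates I/O and cache work, I will model the per-access CPU cost as $t_h \cdot d_i(T) + c_0$, where $t_h$ is the per-level hashing cost and $c_0$ absorbs $T$-independent terms (leaf hash, AES-GCM, cache bookkeeping).

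\textbf{Step 2: expected cost under i.i.d.\ accesses.} Let $p_i$ denote the access probability of block $i$. Since accesses are i.i.d., the expected cost per access is
\begin{equation*}
\mathbb{E}[C(T)] \;=\; t_h \sum_{i=1}^{n} p_i\, d_i(T) \;+\; c_0,
\end{equation*}
so minimizing $\mathbb{E}[C(T)]$ over trees $T$ is equivalent to minimizing $L(T) := \sum_i p_i\, d_i(T)$. Because the $c_0$ term and $t_h$ are constant in $T$, and because throughput is (asymptotically, up to I/O cost hidden by the caching argument of Section~\ref{sec:motivation}) the reciprocal of $\mathbb{E}[C(T)]$, optimal throughput is achieved exactly at a tree minimizing $L(T)$.

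\textbf{Step 3: bijection with prefix codes.} A binary hash tree with $n$ leaves corresponds to a binary prefix code over $n$ symbols by labeling each left edge $0$ and each right edge $1$: the codeword for symbol $i$ is the edge-label sequence from root to leaf $i$, whose length $\ell_i$ equals $d_i(T)$. This correspondence is a bijection between full binary trees with $n$ labeled leaves and complete binary prefix codes over $n$ symbols. Under it, $L(T) = \sum_i p_i\, \ell_i$ is precisely the expected codeword length that Huffman's algorithm minimizes over all prefix codes for the distribution $\{p_i\}$~\cite{huffman1952method,moffat2019huffman}. Hence any optimal prefix code for $\{p_i\}$ yields, via the bijection, a hash tree minimizing $L(T)$ and therefore $\mathbb{E}[C(T)]$.

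\textbf{Main obstacle.} The mathematical core (Steps 2--3) is essentially definitional once the cost model is in place; the subtlety lies in Step~1, namely justifying that per-level hashing cost $t_h$ can be treated as a tree-independent constant. Strictly, $t_h$ depends on the input size to SHA256 (Figure~\ref{fig:bottleneck22}), which for a binary tree is a fixed two-hash concatenation at every level, so the assumption is clean here; but the reduction would break for variable-arity trees, and I would be careful to state the theorem specifically for binary trees (consistent with the conclusion of Section~\ref{sec:motivation} that binary trees dominate). A secondary caveat worth flagging is that the bound is tight only when metadata-I/O and cache effects are negligible relative to hashing, which the motivation section already argues empirically.
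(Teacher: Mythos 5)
Your proof is correct and follows essentially the same route as the paper's: both reduce the problem to Huffman coding by identifying blocks with symbols, access frequencies with weights, and leaf depth (number of hashes recomputed per access) with codeword length, then invoke the optimality of Huffman codes for the functional $\sum_i p_i d_i(T)$. Your version is somewhat more careful than the paper's --- you make the per-access cost model and the tree-independence of the per-level hashing constant explicit, and you correctly flag that the reduction is clean only for fixed arity --- but the underlying argument is the same.
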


\begin{proof}
Let $A=\{a_1, a_2, \ldots, a_n\}$ be a symbol alphabet and $W=\{w_1, w_2, \ldots, w_n\}$ be a set of associated symbol weights. Let $C=\{c_1, c_2, \ldots, c_n\}$ be a prefix code that represents the set of codewords for symbols in $A$. A prefix code $C$ is said to be optimal if it minimizes the \textit{expected} codeword length: $\argmin_{C}~\sum_{i=1}^n w_i |c_i|, c_i\in C$. The length of a codeword is the number of bits in a codeword, or equivalently, the \textit{number of edges} in the path from the root to the symbol leaf in the prefix tree representation of $C$. Huffman coding is a widely-used algorithm to produce optimal prefix codes~\cite{huffman1952method}.

Now let $B=\{b_1, b_2, \ldots, b_n\}$ be a set of disk blocks and $F=\{f_1, f_2, \ldots, f_n\}$ be a set of access frequencies to blocks determined by some known workload profile. Suppose we map each block $b_i$ to a symbol $a_i$ and each access frequency $f_i$ to a symbol weight $w_i$. Running Huffman's algorithm on $A$ and $W$ produces a prefix code with expected codeword length $\sum_{i=1}^n w_i |c_i|$~=~$\sum_{i=1}^n f_i |b_i|$.

In the compression domain, the number of edges represents the number of bits needed to parse a symbol's codeword, while in the hash tree domain it represents the \textit{number of hashes} that must be computed from leaf to root for a block. A hash tree constructed as a Huffman code minimizes the expected number of hashes computed during an update or verification and is therefore an optimal hash tree.
\end{proof}

\begin{figure}[t]
    \centering
    \includegraphics[width=0.8\linewidth]{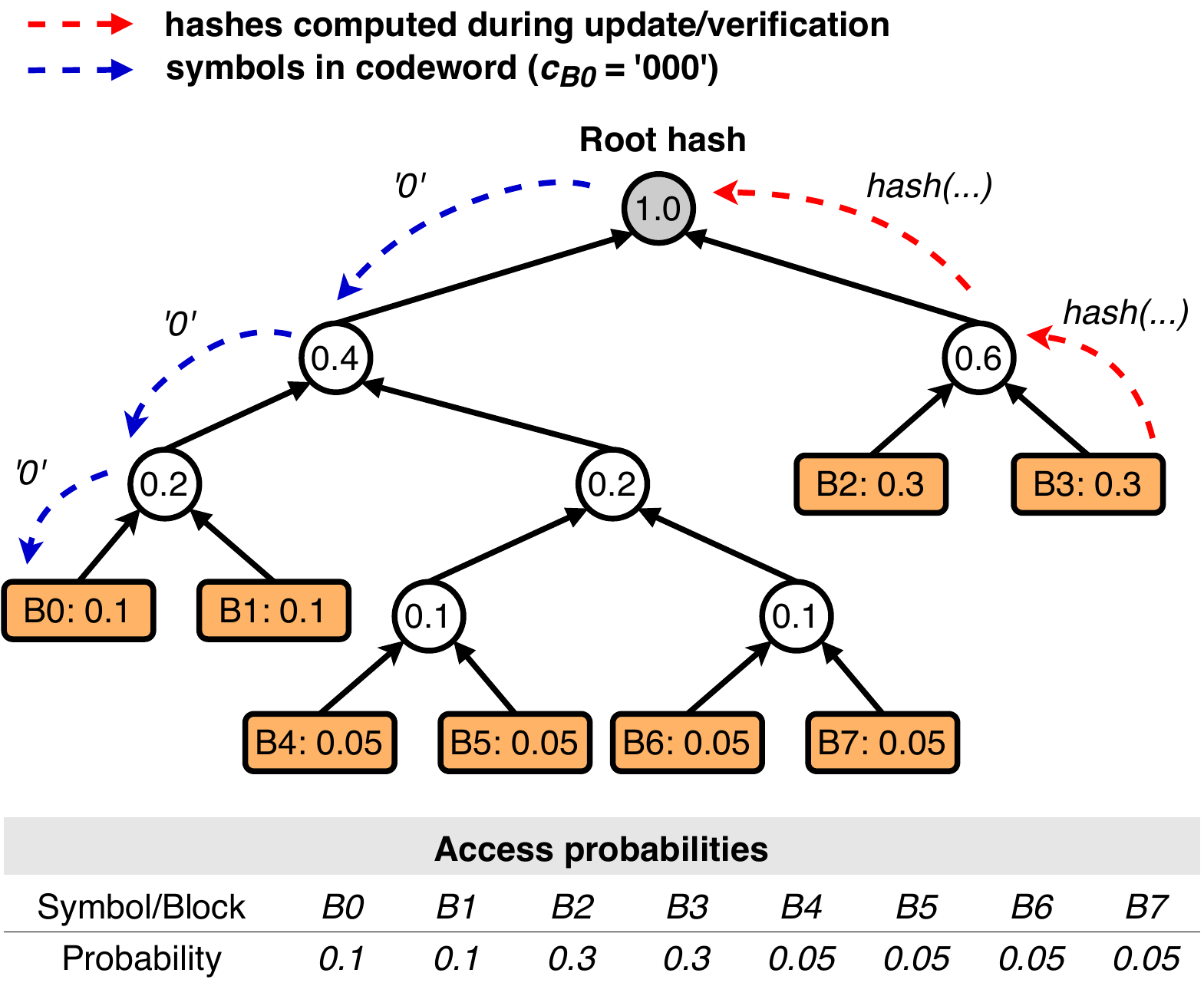}
    \caption{A Huffman tree is an optimal prefix tree. A hash tree constructed as a Huffman tree with a given access probability distribution is an optimal hash tree.}
    \label{fig:huffman}
\end{figure}

\subsection{Extended Optimal Definition}
\label{sec:extended-opt}
Now we extend our optimal definition to consider the effects of cache performance. Note that both data blocks and hashes can be cached in memory. In the compression domain, codeword paths in an (optimal) prefix tree can be parsed in constant work per edge, giving a total amount of work: $\sum_{i=1}^n w_i (|c_i|\cdot O(1))=O(1)\cdot \sum_{i=1}^n w_i |c_i|$. However computing a hash requires (at least) two hash fetches in a binary hash tree: the node's two children. If both hashes are present in memory, fetch costs are negligible and the amount of work is similarly optimal: $\sum_{i=1}^n f_i (|b_i|\cdot O(1))=O(1)\cdot \sum_{i=1}^n f_i |b_i|$. Otherwise if they must be fetched from disk, I/O costs are non-negligible: $\sum_{i=1}^n f_i (|b_i|\cdot t(b_i))=\sum_{i=1}^n f_i |b_i|\cdot t(b_i)$, for some function $t(b_i)$. 

We can model the incurred I/O costs using the average memory access time formula:
\begin{equation}
    \begin{aligned}
    \text{AMAT} & = \text{hit time} + \text{miss rate} \times \text{miss penalty} \\
    \implies t(b_i) & = \text{mem latency} + \text{miss rate} \times \text{reauth latency} \\
    \implies t(b_i) & = H + mD = O(1) + mD
    \end{aligned}
\end{equation}

where $H$ is a fixed memory access cost, $m$ is the miss rate of a node fetch in memory, and $D$ is a fixed fetch/reauthentication cost. Substituting this in, the total amount of work is:

\begin{equation}
    \begin{aligned}
    \sum_{i=1}^n f_i |b_i|\cdot t(b_i) & = \sum_{i=1}^n f_i |b_i|\cdot (O(1) + mD) \\    
    & = \underbrace{O(1)\cdot\sum_{i=1}^n f_i |b_i|}_{\text{base work}} + \underbrace{mD\cdot\sum_{i=1}^n f_i |b_i|}_{\text{I/O costs}}. \\
    \end{aligned}
\end{equation}

\shortsection{Remark}
From our model, we see that higher miss rates for block hashes incur more work per edge, proportional to the expected number of hashes that must be computed per update or verification. Specifically, at a given miss rate, the incurred I/O costs follow the same distribution as the underlying access probability distribution: hotter data has a lower expected amount of base work and incurs lower I/O costs, while colder data has a higher expected amount and incurs higher I/O costs.

We also see that with an optimal cache ($m=0.0$), the expected total amount of work is exactly optimal. However, it has been empirically observed that as cache size decreases, miss rates increase with a power law~\cite{chow1974optimization}, and thus as the cache size decreases, expected I/O costs increase with a power law. This implies that the performance of hash trees is very sensitive to cache size. In particular, as cache memory can be financially costly on cloud servers, being able to synergize well with relatively smaller caches (w.r.t. larger disks) is critical to a practical hash tree deployment.

Moreover, a Huffman tree is optimal under a \textit{known} and \textit{fixed} set of weights (cf. access probability distribution) for an \textit{i.i.d.} source. If the symbol sequence (cf. block access sequence) observed while compressing a message (cf. during a program trace) exactly matches the one used to construct the tree, then the tree will be exactly optimal (i.e., provide optimal throughput). However, if the sequence deviates from the one used to construct the tree, the tree will not be exactly optimal. Similarly, if the source is not \textit{i.i.d.}, then the tree will not be exactly optimal---temporal patterns in the workload may cause the tree to underestimate an upper bound.

\subsection{Optimal Tree Oracle}
\label{sec:oracle}

Our optimal definition provides that, if we have knowledge of a concrete block access sequence (i.e., workload trace), we can instantiate an optimal hash tree from the trace and measure a concrete upper bound on performance (i.e., maximum possible throughput under the given workload). In an \textit{offline} setting, where we have access to workload traces (e.g., recorded with tools like \textit{blktrace} or \textit{fio}), we can feasibly do so. We refer to this methodology as the optimal tree oracle. 

The primary purpose is to measure whether overheads observed by a hash tree design (like dm-verity) are better attributed to the structure of the tree or to a fundamental scaling limit. For example, a hash tree may be performing optimally, but have high overheads, which would suggest that complimentary optimizations (e.g., dividing the tree into one or more independent security domains) may be the only way to break the performance ceiling. In contrast, a hash tree that does not perform optimally under a given workload may require a fundamental redesign.

We liken this approach to Belady's optimal page replacement algorithm~\cite{belady1966study}, a clairvoyant algorithm that has a priori knowledge of future memory accesses and can make optimal page replacement decisions. This gives us the ability to make rigorously grounded conclusions about what hash tree designs perform well and when. We defer analysis with the optimal tree oracle (denoted by H-OPT) to~\autoref{sec:perf-eval}.

\section{Dynamic Merkle Trees}
\label{sec:dmt}

A condition of instantiating an optimal hash tree is that we must have a priori knowledge of the exact workload. This is rarely feasible in practice. This section builds on observations of optimal hash trees to develop a novel hash tree design that can approximate an optimal tree by learning and adapting to workload patterns on-the-fly.

\begin{figure}[t]
    \centering
    \texttt{zipf2.5}
    \includegraphics[width=0.45\textwidth]{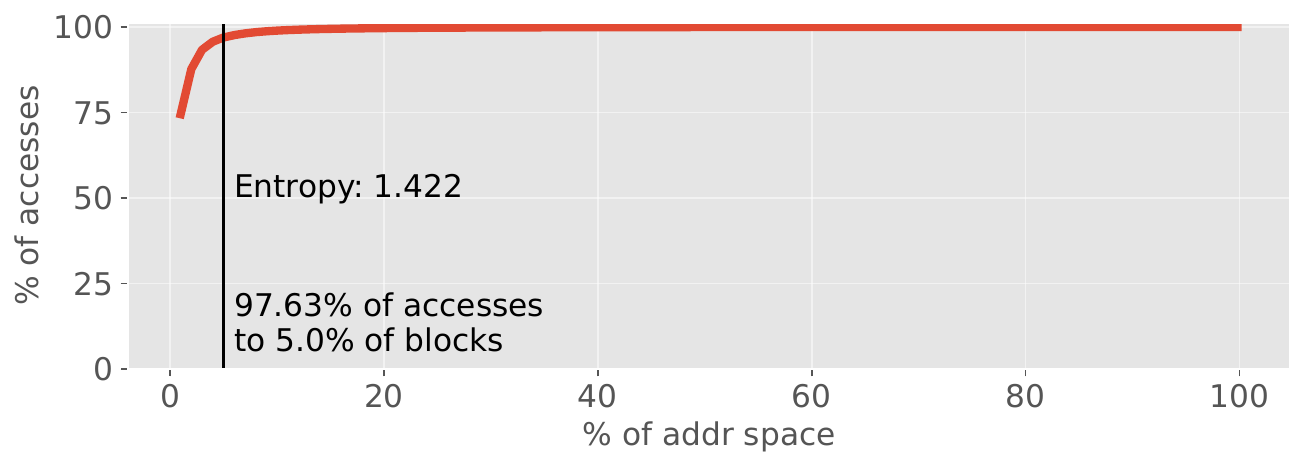}\\
    ~\\
    \caption{In practice, access patterns observed at the storage layer are most often skewed, and Zipfian workloads are used to model this shape~\cite{li2023depth,yang2016write}. This graph shows that a small number of blocks is accessed most of the time, which suggests that operations on the hash tree should also be skewed.}
    \label{fig:skewed}
\end{figure}

\subsection{Challenges}
\shortsection{Finding a Suitable Tree Structure}
State-of-the-art hash tree designs rely on static, balanced tree structures~\cite{android-dm-verity,taassori2018vault}. Balanced trees are optimal under uniform access patterns. However, real-world storage workloads most often exhibit skewed (i.e., non-uniform) access patterns~\cite{li2023depth,wang2023locality,cooper2010benchmarking,yang2016write,arasu2021fastver} where a small number of blocks is accessed most of the time. The result is that the optimal hash trees produced by Huffman codes are often far from balanced.

For example,~\autoref{fig:skewed} shows the access distribution for a Zipfian workload; note that most real-world workloads obey Zipf's law and are modeled with Zipfian workloads~\cite{yang2016write}. The data accesses are highly skewed, which suggests that the operations on the hash tree may also be highly skewed. \autoref{fig:unbalanced} shows this to be true: in a balanced tree over 8192 blocks (a 32~MB disk), leaf node heights are constant at 13, but in the optimal tree we see two distinct regions representing hotter (height~$\approx 10$) and colder data (height~$\approx 30$). 

Optimal trees tend to accumulate hot data high up in the tree and place cold data at nearly a $3\times$ height difference---significantly reducing the verify/update latency for hot data. This indicates that an optimal tree is one that aggressively optimizes for hot data (the working set). Unfortunately, the \textit{static} nature of standard hash tree designs precludes exploiting this skew when present in a workload.

\shortsection{Handling Changing Access Patterns}
Yet, workload characteristics can also vary over time: access patterns may still be skewed but regions of interest may change, or some periods of time may be characterized by more uniform access patterns. This is particularly true for storage that is shared by multiple cooperating applications or users. Thus, a tree that is optimal at one point in time may not be optimal for another (i.e., dynamically optimal). An online solution, one that does not assume a priori knowledge of workload characteristics, therefore must not only be able to \textit{capture} hot data by placing more frequently accessed nodes higher in the tree, but also be able to dynamically \textit{adapt} to changes in what particular data is deemed hot or cold over time.

Adaptive tree structures have been widely studied, particularly for search. Most algorithms focus on keeping trees balanced to reduce worst-case running time. We explicitly aim to remove this constraint; commonly used self-balancing trees (e.g., AVL trees) are therefore ill-fit for our use case. We aim for a more aggressive optimization: allow the tree to become unbalanced as necessary, but be driven by the workload.

\begin{figure}[t]
    \centering
    \includegraphics[width=.65\linewidth]{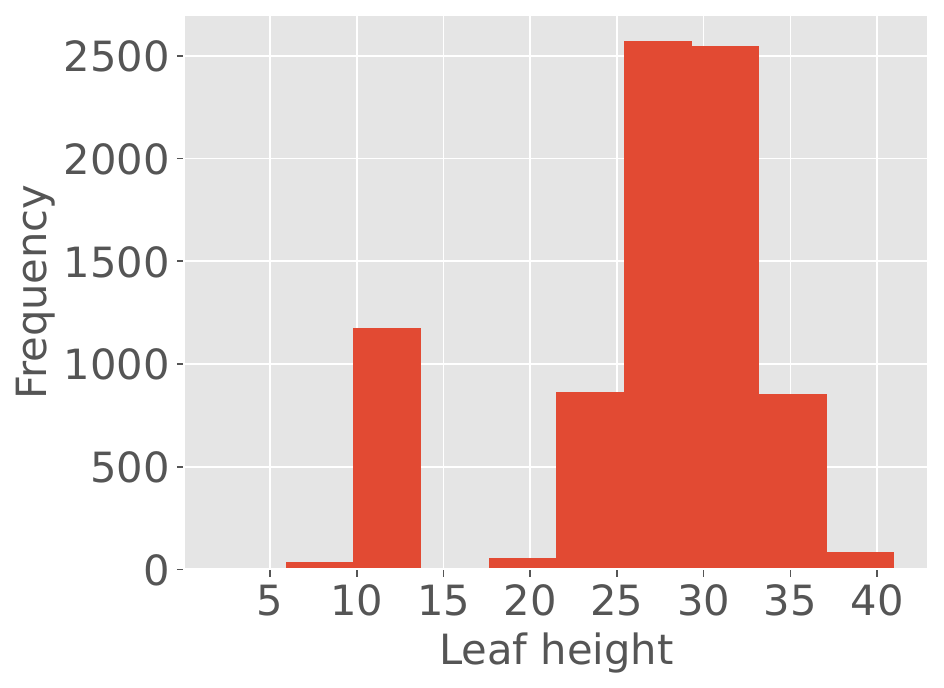}
    \caption{Skewed patterns manifest in optimal hash trees being far from balanced: under a Zipfian probability distribution, there are distinct regions of relatively hotter and colder data.}
    \label{fig:unbalanced}
\end{figure}

\shortsection{Managing Restructuring Costs}
While intuitively it makes sense that unbalanced trees should be able to exploit skewed patterns by placing more frequently accessed leaf hashes closer to the root, realizing this is in a real system is a non-trivial problem. The mechanics of adapting trees involve a series of rotations. While rotations are cheap for search trees, consisting of a series of pointer updates, they are expensive for hash trees, as we have to recompute hashes for all nodes from the rotation point up to the root. This applies when nodes are rotated during either verifications or updates.

The costs of rotating nodes in the tree may therefore quickly outweigh any expected benefits of moving frequent nodes closer to the root. The cost of a rotation itself is also not constant, but proportional to the current height of the nodes involved in the rotation. Further, search trees permit all nodes to be searchable, but as mentioned, only leaf nodes are searchable in a hash tree. We therefore must maintain the invariant that during a rotation, a leaf remains a leaf and an internal node remains an internal node. Otherwise, a rotation will result in an invalid tree structure. 

\subsection{Randomized Splaying}
We draw a connection to a data structure widely used in garbage collection and IP routing: splay trees~\cite{sleator1985self}. Splay trees are a type of binary search tree that brings an accessed leaf to the root through a series of rotations. Importantly, splay trees capture temporal locality by keeping frequently accessed nodes closer to the root (\autoref{fig:splay}). 
However, naively used, the cost of splaying can be extremely expensive, and splaying too frequently or opportunistically may keep the tree more balanced than desired. We adapt the conventional splay tree design to meet the constraints of a hash tree.

\shortsection{Heuristic Parameters} We define three parameters: a splay window flag $w$, splay probability $p$, and splay distance $d$. The splay window flag can be toggled on or off to indicate whether or not the splay window is active (i.e., whether or not we should consider a tree node to be splayed). This notion is useful because there may be certain periods at runtime where splaying should necessarily not occur. This may be the case, for example, if the system administrator has knowledge of current application access patterns or profiles them periodically, or if other background storage tasks may be in progress that require stability of data (e.g., health checks).

If the splay window is active, the splay probability denotes the probability that an accessed node should be splayed. The key intuition is that splaying is an expensive operation, but we can amortize costs by only splaying on a small percentage of accesses (e.g., 1\% of the time). Finally, if a node is decided to be splayed, the splay distance defines the maximum number of levels that the node should be splayed (i.e., \textit{promoted}) up the tree. This entire process occurs at the end of each verification or update call and before anything is returned to the caller. 

\subsection{Technical Approach}
\shortsection{Analyzing Data Hotness}
The splay distance is the central parameter that determines the effectiveness of a splay operation. Determining a suitable splay distance is challenging, as there is an inherent risk vs. reward trade-off when splaying a node. Splaying any node all the way to the root may be very beneficial if a node is relatively hot, as future accesses to the data can quickly benefit from the promotion. However, doing so would be severely wasteful if the node is cold, as the tree will then require several \textit{additional} rotations to eventually promote hotter data and demote the cold data from a higher position in the tree. Finding an accurate and practical hotness metric is critical to balancing this trade-off.

We attach an integer hotness counter to each tree node that is incremented whenever a node is promoted in the tree and decremented whenever a node is demoted in the tree. This applies to both leaves (i.e., blocks) and internal nodes (which indicate the hotness of particular subtrees/blocks). The purpose of the hotness counter is to track the relative access frequencies of nodes as they are rotated, and use this information to determine how far to splay the node up the tree.

\begin{figure}[t]
    \centering
    \includegraphics[width=0.9\linewidth]{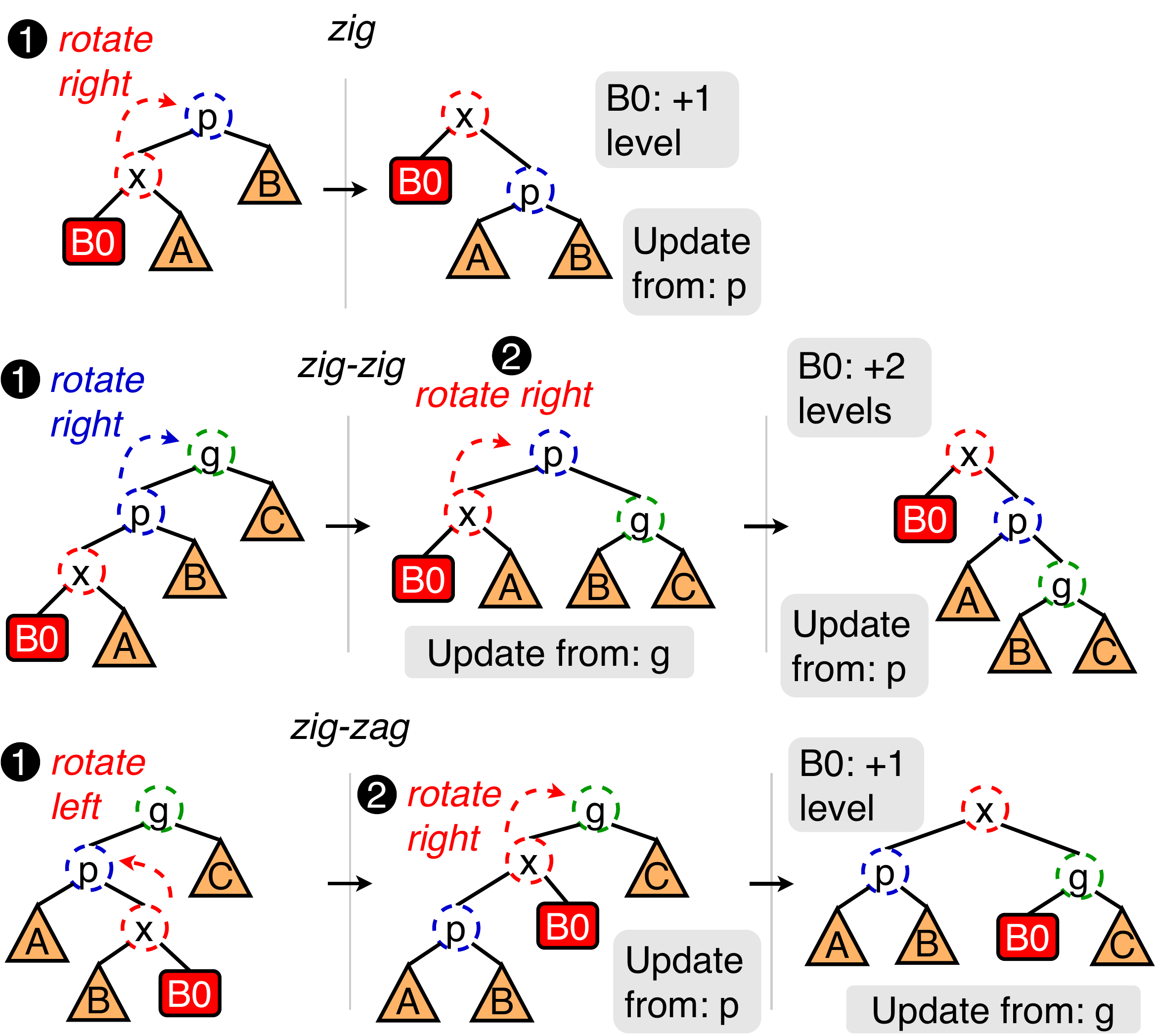}
    \caption{Splay trees are a type of binary search tree that capture temporal locality by bringing an accessed leaf closer to the root. We use a splay-based hash tree design to similarly capture temporal locality in cloud block storage workloads.}
    \label{fig:splay}
\end{figure}

The counter is initialized to zero after the node is authenticated and cached; the hotness of nodes that are not currently cached in memory is therefore not tracked. 
The purpose of this is to localize our analysis of data hotness to the working set. Note that this approach can negatively affect performance for small caches, as it will be difficult to draw a contrast between relatively hotter, warmer, and colder data when counters are reset frequently. Nonetheless, the splay distance is a function of the hotness. The splay distance is computed in a straightforward manner: at a distance proportional to the hotness. For simplicity, we set the splay distance to be $h$ levels, where $h$ is the current hotness counter value of the accessed leaf.

Intuitively, nodes that are deeper in the tree (colder) will climb the tree slowly, while nodes that are higher in the tree (hotter) will climb quickly. Note that our initial exploration into this space could be expanded with sketching algorithms, machine learning, or other sophisticated techniques~\cite{hashemi2018learning}.

\shortsection{Promotion \& Demotion}
After computing the splay distance, the final step is to execute the splay operation. Splaying a DMT is done in nearly the same way as it is in a search tree. There are three cases to consider when splaying a node: \textit{zig}, \textit{zig-zig}, and \textit{zig-zag} (\autoref{fig:splay}). In a zig case, the node's parent is the root, and we rotate the node up to the root. In a zig-zig case, the node's parent is not the root, and the node and the node's parent are either both left or right children. In this case, we perform two rotations along the same direction to rotate the node up two levels. In the zig-zag case, the node's parent is not the root, and the node and the node's parent are opposite-side children. Thus, we perform two rotations along opposite directions to rotate the node up two levels.

A consequence of splaying is that an accessed node will either be promoted two levels (or to the root). Neighboring nodes will similarly be promoted opportunistically as a side-effect of the splay. This provides two benefits. Nodes that are accessed frequently will therefore have an increasingly shorter path to the root, making verifications and updates quicker. More subtly, nodes that are accessed in close temporal proximity will slowly accumulate in nearby regions of the tree, allowing to exploit spatial locality within the tree.

\shortsection{Maintaining Hash Tree Invariants}
We make three key changes to the standard splay operation to maintain three tree invariants. First, during a splay, we must ensure that a leaf node remains a leaf node and an internal node remains an internal node. Otherwise, a rotation will result in an invalid tree structure. For example, if a leaf node is splayed to the root, the root will become a leaf node, which is invalid. Whenever a block is read or written, we therefore execute a splay on the accessed leaf's parent rather than the leaf.

Next, we propagate the child status (left/right) to the splay operation, and swap the children of the parent node and the accessed node where necessary.
This preserves the structural constraint for a valid hash tree while still ensuring the maximum degree of promotion for the accessed node.

Finally, splaying naturally introduces inconsistency into the tree, as it alters parent-child relationships. This will cause any subsequent hash fetches on a cache miss to fail due to an inconsistent root hash. We must therefore ensure that the tree remains in a consistent state by preemptively fetching (and authenticating) all sibling hashes before performing a rotation, then using them to commit the change immediately after. That is, parent hashes up to the root are recomputed per rotation (see "Update from" in~\autoref{fig:splay}). While updates can be costly, splaying can reduce these costs over time.

\section{Evaluation}
\label{sec:perf-eval}

We compare DMT performance against two insecure baselines (No encryption/no integrity, Encryption/no integrity), the \oldtxt{state-of-the-art} binary trees used by dm-verity~\cite{android-dm-verity}, and the \newtxt{optimal binary tree} \oldtxt{optimal tree oracle} (H-OPT). We also juxtapose DMTs against the high-degree trees that have been widely used in secure memory systems~\cite{taassori2018vault,feng2021scalable}. Finally, as noted previously (see~\autoref{fig:bottleneck23}), we also examine DMTs with respect to 4-ary and 8-ary trees, which have not been considered by prior work. An overview of our parameters is shown in~\autoref{tab:parameters}.

\subsection{Experiment Setup}
\label{sec:exp-setup}

\begin{table}[t]
    \centering
    \small
    \begin{tabular}{ll}
        \toprule
        \textbf{Parameter} & \textbf{Description} \\
        \midrule
        \textit{Capacity} & Usable capacity for data blocks \\
        \textit{Cache size ratio} & Cache size as \% of tree size \\
        \textit{Read ratio} & \% of read operations \\
        \textit{I/O size} & Size of application I/O \\
        \textit{I/O depth} & Max no. outstanding application I/Os \\
        \textit{Thread count} & Number of application threads \\
        \bottomrule
    \end{tabular}
    \caption{Experiment parameters.}
    \label{tab:parameters}
\end{table}

\shortsection{Implementation}
We implement the hash trees in 5~K lines of C++. We use BDUS to implement custom block device drivers that wrap lower-level drivers~\cite{faria_bdus_2021}. BDUS has a kernel module that exposes block layer hooks to userspace. The two primary functions of interest are \texttt{read()} and \texttt{write()}, which are invoked by the kernel whenever a block is read from or written to the block device, respectively. We perform a verify immediately after a block is read and an update  immediately before a block is written to disk. Our basic data unit aligns with the disk I/O size (4~KB blocks)~\cite{khati2017full,brovz2018practical,chakraborti2017dm,nist-aes-xts}.

\shortsection{Testbed}
We perform all experiments using AWS EC2 i4i.8xlarge instances equipped with 32 cores, 256~GB memory and locally-attached NVMe SSDs for data and metadata. Note that currently there are no available cloud instance types which both have local NVMe storage and also support confidential VM technology such as AMD SEV-SNP. We reinitialize hash trees between each experiment, use a standard LRU cache replacement policy, and we set the splay window flag $w=True$ and splay probability $p=0.01$ for DMTs.

\shortsection{Cryptographic Settings}
Like prior works, we ensure deterministic authenticated encryption with AES-GCM~\cite{avanzi2022cryptographic,taassori2018vault}. We use a 128-bit encryption key for blocks. The MACs produced during the encryption process are used as the leaves in the hash tree. For internal nodes, we compute 256-bit hashes using SHA-256 with a 256-bit key.

\shortsection{Workload Settings}
We perform a broad analysis across different system and workload configurations, parameterized by disk capacity, hash cache size, read/write ratio, I/O size, thread count, and I/O depth. This enables exhaustively examining the performance space of DMTs. We also examine different degrees of workload skewness, from pure uniform to highly skewed. We focus especially on the Zipfian workload discussed in~\autoref{sec:motivation}, which closely approximates real-world block-level access patterns, which are highly skewed (and write-heavy)~\cite{li2023depth,yang2016write}; see~\autoref{fig:dist}. We also use a recently published Alibaba dataset recorded from an array of 1000 volumes backing various virtual machines in a public cloud datacenter~\cite{li2023depth}. Finally, we demonstrate how driver-level improvements translate to application-level improvements with a case study of the Filebench OLTP workload~\cite{Tarasov2016FilebenchAF}.

Like prior works, we generate workloads with fio~\cite{fio}; we record/replay traces for the optimal. Workloads have a 5 minute warmup period and 15 minute benchmark period.

\subsection{Results}
We focus our analysis on three questions:
\begin{enumerate}
    \item \textit{How well do state-of-the-art hash tree designs perform across the various system and workload settings that characterize cloud block storage deployments?}
    \item \textit{To what extent can DMTs improve performance over the state of the art, and under what conditions?}
    \item \textit{What memory and storage trade-offs do DMTs make?}
\end{enumerate}

\begin{figure}[t]
    \centering
    \includegraphics[width=0.5\textwidth]{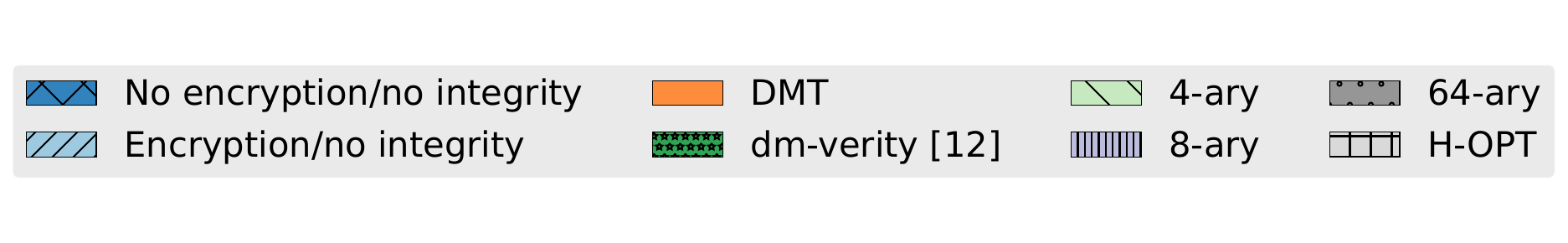}
    \includegraphics[width=0.45\textwidth]{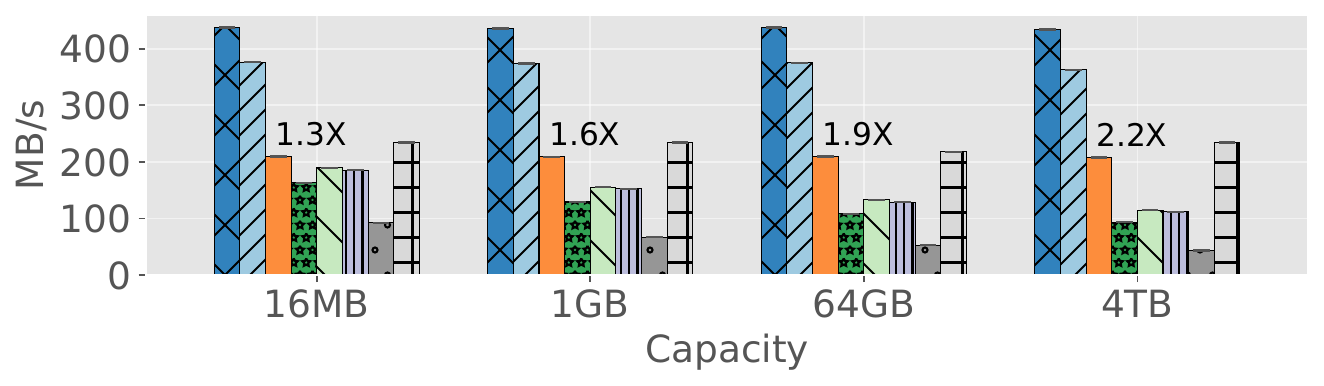}
    \caption{\newtxt{Aggregate throughput with Read ratio at 1\%. DMTs provide up to $2.2\times$ higher throughput than the state-of-the-art, demonstrating its ability to scale well with capacity.} \oldtxt{Aggregate throughput with Read ratio at 1\%. DMTs scale better than the state-of-the-art, providing up to $2.2\times$ throughput improvements.}}
    \label{fig:perf-vs-cap-zipf-tp}
\end{figure}
\begin{figure}[t]
    \centering
    \includegraphics[width=0.45\textwidth]{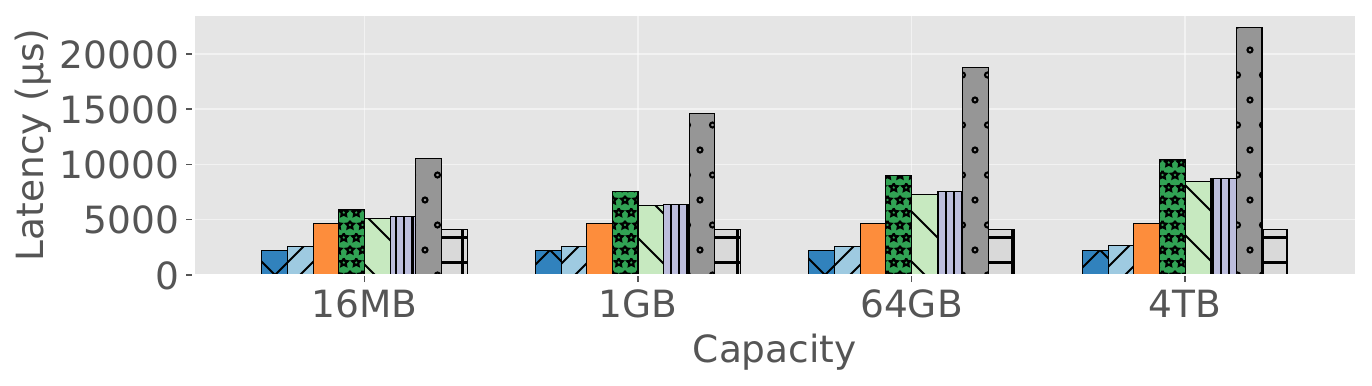}
    \includegraphics[width=0.45\textwidth]{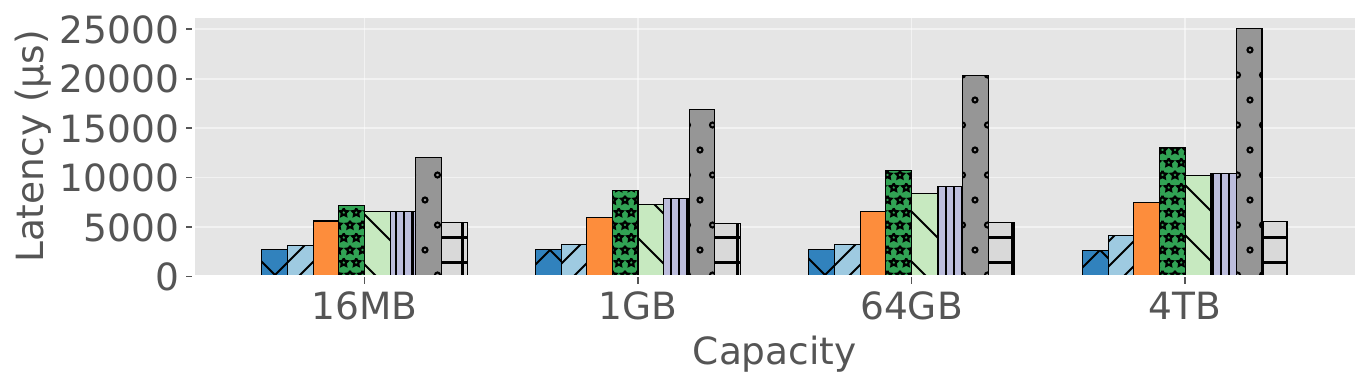}
    \caption{P50 (top) and P99.9 (bottom) write latency. \newtxt{DMTs median and tail latencies reflect throughput improvements, demonstrating it can provide a stable performance guarantee.} \oldtxt{ DMTs show up to 2.2× median and tail latency improvements.}}
    \label{fig:perf-vs-cap-zipf-lat}
\end{figure}

\shortsection{Scaling with Capacity}
We first analyze how disk capacity (which affects tree height) and hash cache size affect performance. Where appropriate, default parameters include---Read ratio: 1\%, I/O size: 32~KB, Thread count: 1, I/O depth: 32, Capacity: 64~GB, Cache size: 10\%. We choose these parameters because they showcase the best performing configuration for the baselines. We examine various workload shapes ranging from uniform to highly skewed Zipfian. We focus particularly on $\theta$: 2.5 because it closely approximates the shape of real-world storage workload patterns~\cite{li2023depth,yang2016write} (see~\autoref{fig:dist}).

\autoref{fig:perf-vs-cap-zipf-tp} shows that aggregate read/write throughput decreases w.r.t. capacity for all balanced trees. Note that the Zipfian workload is emitted from an \textit{i.i.d.} source and is therefore an exact upper bound. We observe that 64-ary trees are the worst performing: reduced tree height can reduce the effective number of hashes that must be computed, but results in lower cache efficiency, which amplifies metadata I/O costs. The state-of-the-art binary trees incur up to a 75\% throughput loss over the Encryption/no integrity baseline at 4~TB. 4-ary and 8-ary trees similarly suffer from a 70\% throughput loss at 4~TB. In contrast, DMTs consistently deliver the highest throughput and >85\% of optimal throughput across all capacities. This clearly demonstrates that DMTs can scale to higher or lower capacities more efficiently. And as noted in~\autoref{sec:motivation}, with even faster storage devices, the proportion of time spent hashing vs. performing the data access will grow substantially, increasing our observed DMT speedups.

Latency improvements are the same---DMTs splay on 1\% of accesses and those costs are amortized over time because splaying occurs most frequently on hot data. \autoref{fig:perf-vs-cap-zipf-lat} corroborates this: DMT median and tail write latencies are still significantly lower than the state of the art.

\begin{figure}[t]
    \centering
    \includegraphics[width=0.45\textwidth]{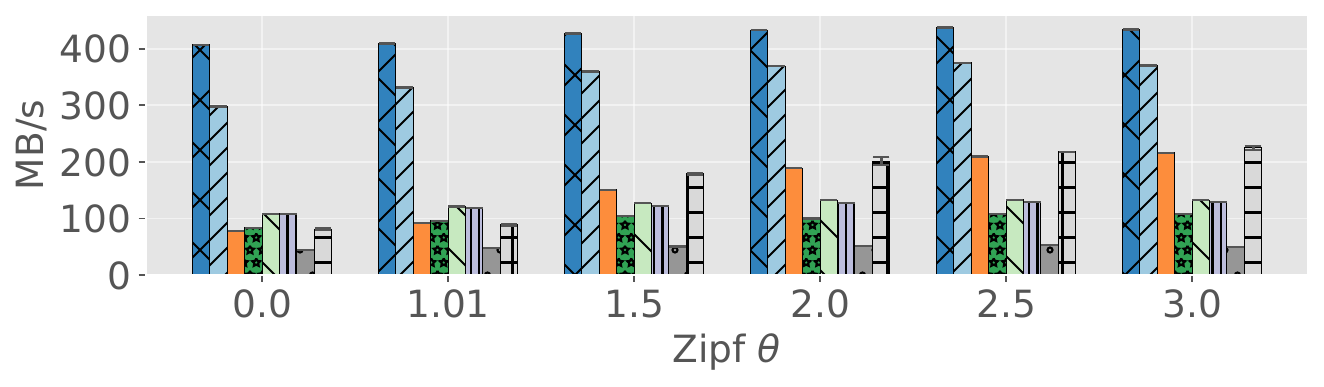}
    \caption{Aggregate read/write throughput. DMTs perform best under skewed workloads; under uniform workloads they observe a 6\% cost over binary trees due to exploratory splays.}
    \label{fig:perf-vs-skew}
\end{figure}

\begin{figure}[t]
    \centering
    \includegraphics[width=0.45\textwidth]{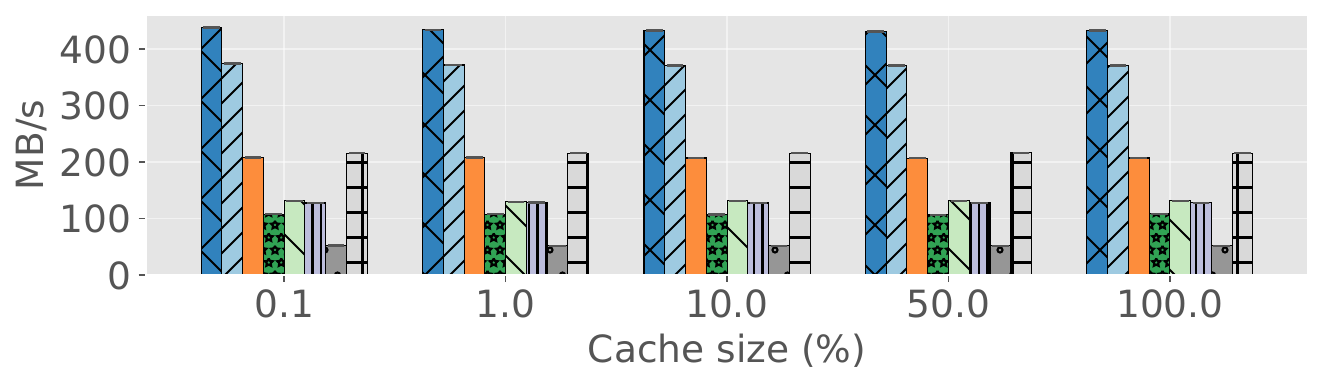}
    \caption{Aggregate throughput. DMTs maintain the highest throughputs across both small and large cache sizes.}
    \label{fig:perf-vs-cache}
\end{figure}

\shortsection{Impact of Workload Skewness}
\autoref{fig:perf-vs-skew} shows how performance changes w.r.t. workload skewness. We observe again that 64-ary trees are the worst performing. DMTs provide up to $2\times$ speedups over the state of the art binary trees under heavy skew, but incur a 6\% cost under more uniform patterns due to exploratory splays which yield no benefit. We attribute this low cost to the fact that DMTs inherit the theoretical guarantees of splay trees, which provide $O(log~n)$ amortized lookup (i.e., verification or update) time. Thus, DMTs perform at least as good as balanced (binary) trees on average. 

We also observe that 4-ary and 8-ary trees deliver 25\% higher throughput than DMTs under more uniform workloads. As discussed in~\autoref{sec:motivation}, low-degree trees hit the optimal points in the design space for \textit{balanced} trees (reduced tree height without adverse effects on cache performance); prior works have not considered this. However, when workloads become skewed, there is a substantial opportunity cost: 4-ary and 8-ary trees do not perform better than the optimal binary tree. This highlights a key observation: increasing tree degree alone is not sufficient to maximize performance. We believe that extending the DMT design to 4-ary and 8-ary trees will yield the most performant and generalized solution.

\begin{figure}[t]
    \centering
    \includegraphics[width=0.5\textwidth]{figures/o3/legend-dmt.pdf}
    \includegraphics[width=0.45\textwidth]{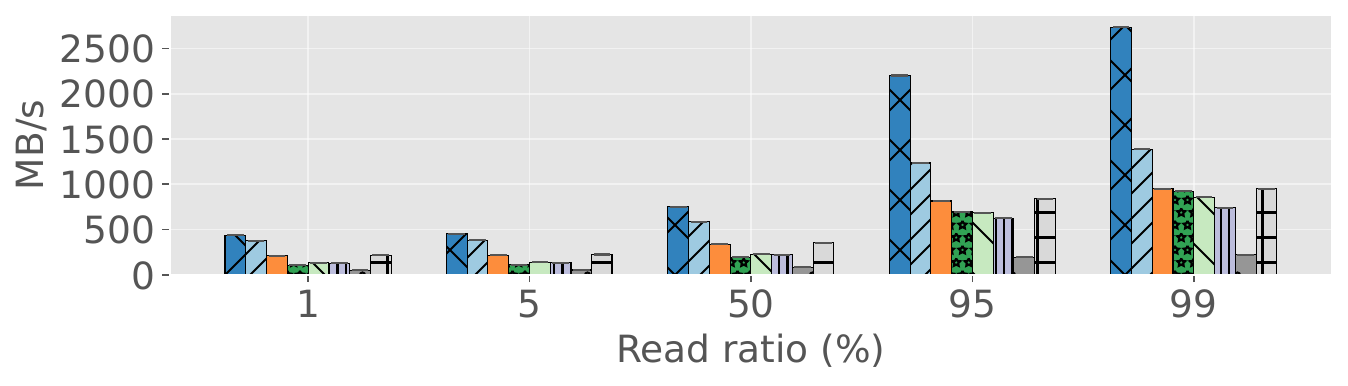}
    \includegraphics[width=0.45\textwidth]{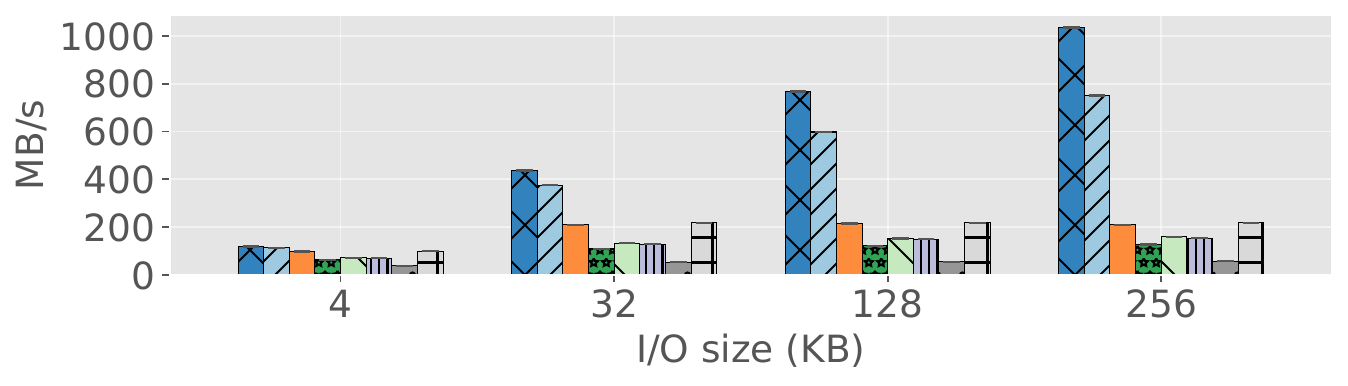}
    \includegraphics[width=0.45\textwidth]{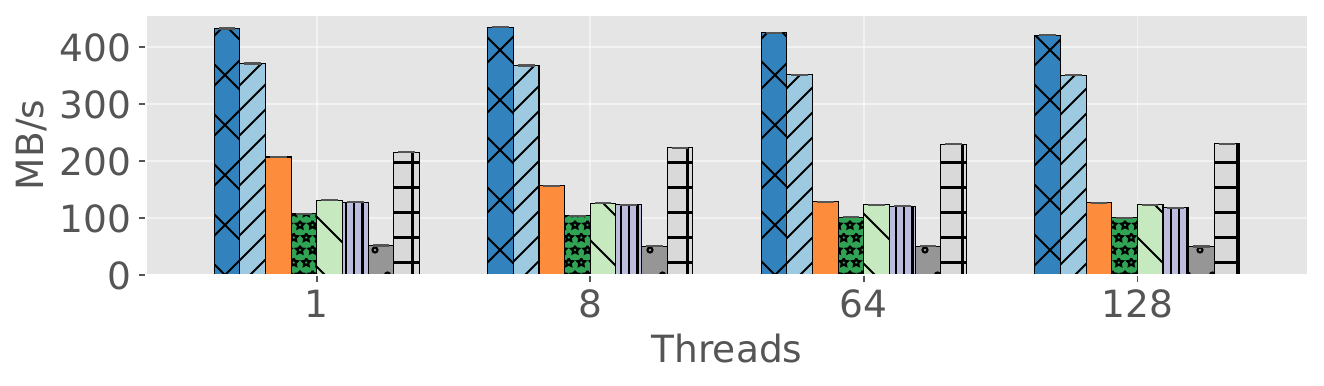}
    \includegraphics[width=0.45\textwidth]{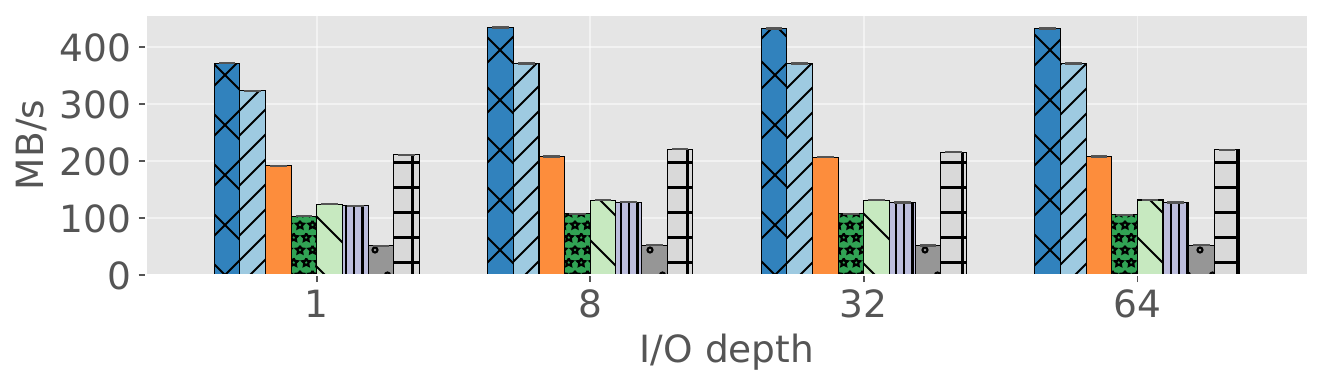}
    \caption{DMTs show speedups across different read ratios, I/O sizes, thread counts, and I/O depths.}
    \label{fig:tp-vs-param}
\end{figure}

\shortsection{Impact of Cache Size} The above analysis showed that DMTs can exploit skewed patterns in workloads when present and deliver a stable performance guarantee across various capacities. Now we examine the effects that other system settings and workload characteristics have on performance. The goal is to evaluate whether these conclusions about DMTs hold broadly. We continue with the Zipf(2.5) workload.

\autoref{fig:perf-vs-cache} shows that DMTs maintain the highest throughputs across both small and large hash caches. Note that cache size is specified as a ratio of the tree size; the absolute cache size varies with capacity. Further, caches mostly benefit read I/Os (because they enable early returns), but write I/Os still must traverse the entire path to the root. In general, we observe that increasing cache size beyond 0.1\% does not yield significant performance improvements for any hash tree design; small caches are already very efficient. Yet, losses observed by all balanced tree designs are still significant. This shows that caching only helps to an extent---when caches are efficient, hash tree overheads are largely attributable to how efficient the tree structure is. However, DMTs still deliver near-optimal performance and the highest across all sizes.

\shortsection{Impact of Read Ratio, I/O Size, Thread Count, and I/O Depth}
\autoref{fig:tp-vs-param} (top) shows how the performance changes with respect to the read ratio. We expect that at higher read ratios, DMTs, balanced, and optimal trees will all observe higher absolute throughputs, as reads can be quickly served by early returns due to caching. However, when there is a significant proportion of writes ($\leq 50\%$ read ratio), DMTs provide nearly $2\times$ higher throughput than balanced trees. 
Since storage access patterns tend to be write-heavy (due to application-level caches and the OS page cache), this shows that DMTs can more reliably handle write-heavy workloads, while delivering comparable performance under read-heavy workloads.

\begin{figure}[t]
    \centering
    \includegraphics[width=0.4\textwidth]{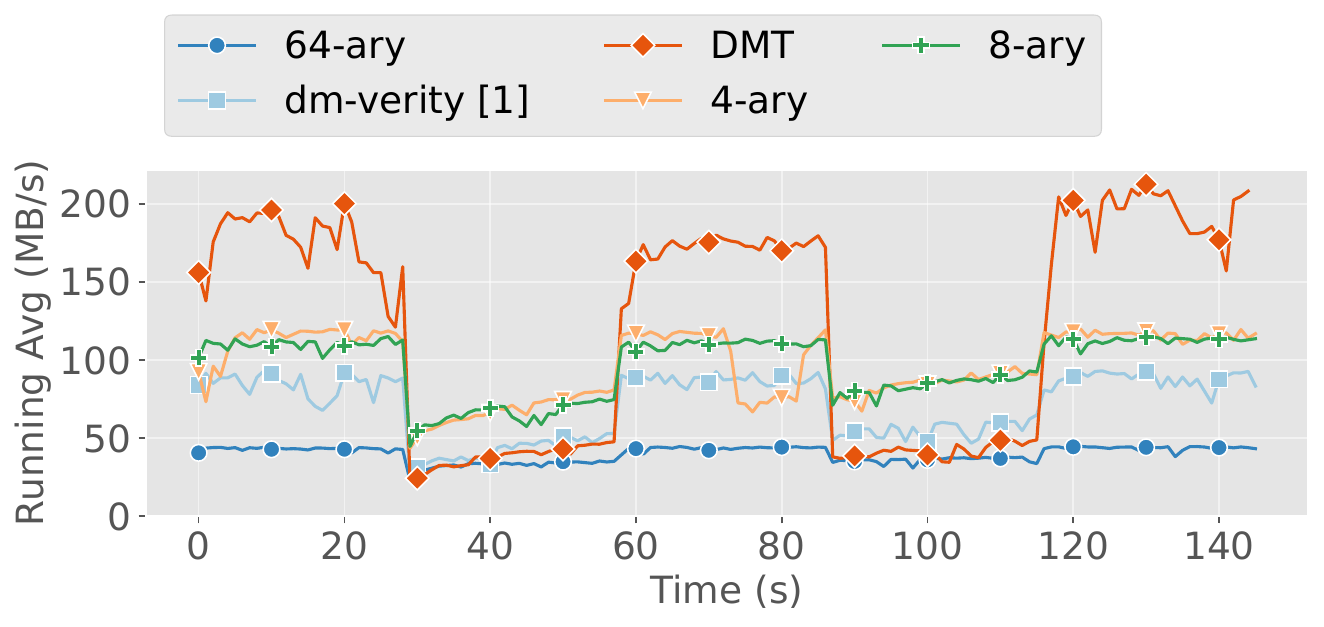}
    \caption{DMTs can adapt quickly to changing workloads, exploiting skewed patterns when present.}
    \label{fig:tp-vs-time}
\end{figure}

The remaining graphs in~\autoref{fig:tp-vs-param} reflect the above observations. Baseline throughputs increase w.r.t. I/O size, but hash tree performance saturates at 32~KB I/Os---larger I/O sizes only lead to increased latencies without improved throughputs. A single thread is sufficient to saturate the device bandwidth. And an application I/O depth of 32 is sufficient to saturate the device bandwidth. DMTs still deliver up to $2\times$ and $4\times$ higher throughput over the state of the art binary and 64-ary trees. The extent to which DMTs see speedups is thus best attributed to the workload shape (\autoref{fig:perf-vs-skew}).

As noted in~\autoref{sec:motivation}, 32~KB write I/Os require 8 sequential hash tree updates. State of the art works are not truly concurrent, but still rely on a global tree lock to serialize tree updates~\cite{taassori2018vault,freij2021bonsai,feng2021scalable}. Performance has been shown to improve under high concurrency by using lazy verification (deferring and batching updates)~\cite{arasu2021fastver}, but lazy verification violates freshness guarantees.
Designing concurrency-optimal hash trees (and search trees in general~\cite{shanny2022occualizer}) is an open problem.

\begin{figure}[t]
    \centering
    \includegraphics[width=0.165\textwidth]{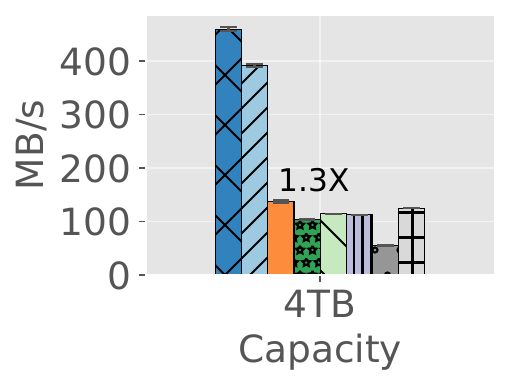}
    \includegraphics[width=0.3\textwidth]{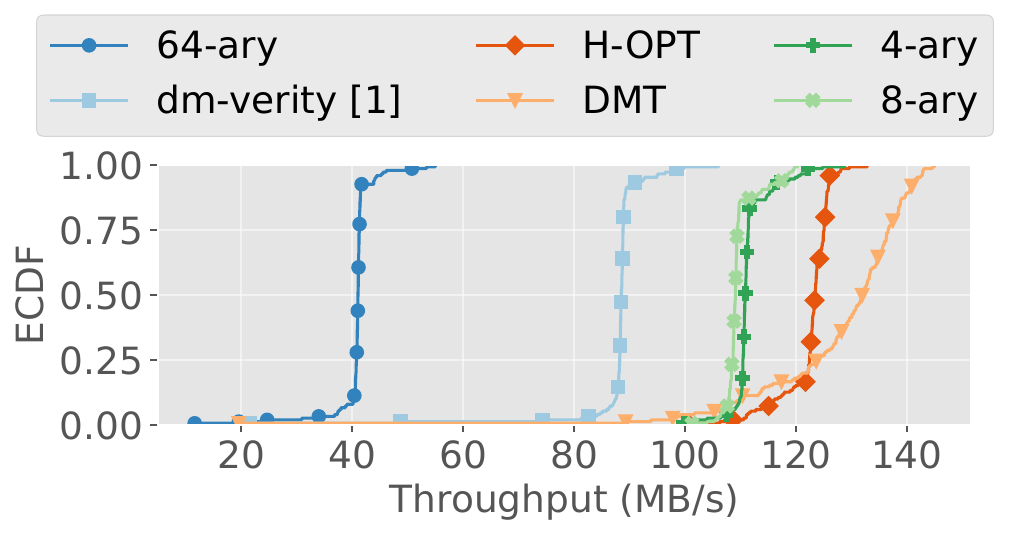}
    \caption{On an Alibaba cloud volume trace, DMTs deliver notable speedups, while high-degree trees perform worst.}
    \label{fig:perf-vs-cap-alibaba}
\end{figure}
\begin{table}[t]
    \centering
    \small
    \begin{tabular}{lrrr}
        \toprule
        & \texttt{DMT} & \texttt{dm-verity} & \texttt{No enc/no integrity} \\
        \midrule
        \textit{write} & 255.4~MB/s & 151.9~MB/s & 318.8~MB/s \\
        \textit{read} & 0.7~MB/s & 0.4~MB/s & 1.0~MB/s \\
        \bottomrule
    \end{tabular}
    \caption{Application read/write throughputs for the Filebench OLTP workload. DMT driver-level improvements are reflected at application-level.}
    \label{tab:oltp}
\end{table}

\shortsection{Handling Changing Access Patterns}
We now demonstrate that DMTs self-adapting nature is robust to changing workload patterns. \autoref{fig:tp-vs-time} shows a 150-second snapshot of sampled throughputs under a workload that exercises an extreme case where patterns alternate between uniform and skewed: Zipf(2.5) > Uniform > Zipf(2.0) > Uniform > Zipf(3.0). Phases are 30 seconds long, and the Zipfian phases are randomly centered at a new region in the address space. We observe that DMT throughput spikes within a few seconds of entering the Zipfian phases and DMTs maintain the speedup throughout. This shows that DMTs can capitalize on skewed patterns very quickly to maximize performance while delivering performance comparable to binary trees otherwise. As noted, extending DMTs to 4-ary trees can help further improve DMT performance during uniform phases. 

\shortsection{Case study: Alibaba Cloud Volumes}
We showed that DMTs perform near-optimally under a broad range of system and workload settings. We now examine how these observations hold under a real workload sampled from a recently published Alibaba trace dataset~\cite{li2023depth} (logical volume ID 4). We scale the offsets and I/O sizes proportionally to the experiment capacity. Note that the remaining volume traces are qualitatively the same (mean write ratio >98\% and highly skewed). Further, note that the workload is non-\textit{i.i.d.} and therefore H-OPT can underestimate the upper bound on throughput; temporal patterns enable DMTs to perform better in some cases.

\autoref{fig:perf-vs-cap-alibaba} (left) shows the aggregate throughputs observed at a 4~TB capacity, and \autoref{fig:perf-vs-cap-alibaba} (right) shows the distribution of write throughputs (sampled at 1-second intervals). Binary trees observe a 75\% throughput loss at 4~TB capacity, while 64-ary trees observe an 88\% throughput loss. DMTs provide a $1.3\times$ speedup over the binary trees and a $1.2\times$ speedup over the 4-ary trees. Importantly, the optimal (binary) tree still observes 15\% higher throughput than 4-ary and 8-ary trees. This further supports our claim that a balanced tree structure is not sufficient to maximize the performance potential. As noted above, we believe the extending DMT principles to a 4-ary tree can help achieve maximum performance.

\shortsection{Case study: OLTP Workload}
We now consider the Filebench OLTP workload, an exemplar application that is commonly run in the cloud and requires robust security protections~\cite{dong2024cloud}. The goal is to evaluate how DMT device-level improvements translate to application-level improvements. The workload consists of 10 writer threads and 200 reader threads and is write-heavy. We run the workload for 10 minutes on a 1~TB disk (with a dataset size of $\approx$922~GB) formatted with ext4 and using a hash cache size of 10\%. \autoref{tab:oltp} shows that DMTs have $1.8\times$ improved read and $1.7\times$ improved write performance over the state of the art. 

\begin{table}[t]
    \centering
    \small
    \begin{tabular}{lcc}
        \toprule
        & \textbf{Memory~Overhead} & \textbf{Storage Overhead} \\
        \midrule
        \textit{leaf nodes} & $0.44\times$ & $0.29\times$ \\
        \textit{internal nodes} & $0.80\times$ & $0.75\times$ \\
        \bottomrule
    \end{tabular}
    \caption{DMTs require additional memory/storage for tree nodes, but break even on this trade-off: they provide higher performance than balanced trees, at a smaller cache budget.}
    \label{tab:memory-overhead}
\end{table}

\shortsection{Memory \& Storage Overhead}
DMTs provide several advantages, but have higher memory and storage requirements than balanced trees (\autoref{tab:memory-overhead}). DMTs cannot use implicit indexing like balanced trees, but instead require explicitly storing parent-child pointers (as integer node IDs) both for nodes in-memory and on-disk. This implies at least one additional integer field for leaf nodes, at most three additional integer fields for internal nodes, and one additional integer hotness counter field for all nodes. However, we showed that cache hit rates are very high even for very small caches. For example, DMTs provide better performance at a cache size of 0.1\% than binary trees do at a cache size of 1\%. Thus, DMTs deliver better performance per dollar spent on cache memory.

\begin{tcolorbox}[colframe=white]
    \textbf{Key takeaways:} State-of-the-art hash trees incur substantial performance loss (up to 80\%). DMTs \newtxt{address} \oldtxt{improve on} this by exploiting skewed \oldtxt{workload} patterns when present and adapting quickly to changes over time. We conclude that balanced \oldtxt{hash} trees are ill-suited as the base construction of a hash tree: maximizing performance requires tailoring the tree structure to \oldtxt{the} workload pattern\newtxt{s}.
\end{tcolorbox}

\section{Related Work}
\label{sec:related}

Hash trees are a core component of many computing systems, including blockchains, secure memories, etc.~\cite{buterin2016ethereum,naor2000certificate,android-dm-verity,taassori2018vault,avanzi2022cryptographic,gassend2003caches,mckeen_innovative_2013,erway2015dynamic,arasu2021fastver,sinha2018veritasdb,li2006dynamic,feng2021scalable}. We discuss these related works below.

\shortsection{Secure Memory}
Secure memories have been long-studied~\cite{yan2006improving,gassend2003caches,feng2021scalable,rogers2007using}. Their goal is to provide a secure environment in which the secrecy and integrity of application code and data can be assured. Recently, secure memories have seen widespread commercial success through implementations such as Intel SGX~\cite{mckeen_innovative_2013,tsai_graphene-sgx_nodate,priebe_sgx-lkl_2020}. 
Consequently, recent work has shown that hash trees can severely degrade memory performance, and optimizing them has been a central focus of recent research. 
State-of-the-art approaches, such as Penglai~\cite{feng2021scalable}, FastVer~\cite{arasu2021fastver}, and VAULT~\cite{taassori2018vault}, have shown that optimizations like caching and increasing tree degree can lower overheads in some contexts (e.g., at small capacities). 

However, we focus on persistent storage rather than main memory. Hash tree performance implications in the context of cloud block storage remain largely unknown, perhaps due to the presumption that hashing costs are negligible at the block layer. We showed that this is not the case. Further, storage systems are subject to vastly different workload characteristics, capacities, and cache behaviors than memory systems. We juxtaposed DMTs against the high-degree trees used for secure memories~\cite{freij2021bonsai,taassori2018vault}) and showed that such approaches are not applicable to storage. Additionally, some prior works have decidedly side-stepped the issue of writes by either suppressing caches during updates or limiting analysis to read-heavy or read-only workloads~\cite{arasu2021fastver,sinha2018veritasdb}. Given that storage workloads are write-heavy, we closely examine write-heavy workloads. 
Further, experiments in these prior works have been limited to small capacities (e.g., 16GB).
We report experiments with an implementation on real block devices on a live Linux system and evaluate a wider range of system settings.

\shortsection{Authenticated Data Structures}
Hash trees have also been examined in the broader theoretical context of authenticated data structures~\cite{tamassia2003authenticated,miller2014authenticated,crosby2011authenticated}. They become a central component of mobile and embedded device storage: dm-verity has played a pivotal role in providing verified boot for Android smartphones~\cite{android-dm-verity}. They have also been examined in the context of blockchains~\cite{buterin2016ethereum}, certificate revocation systems~\cite{melara2015coniks}, and provable data possession schemes~\cite{erway2015dynamic,dahlberg2016efficient}. These works have highlighted the theoretical efficiency of Merkle hash trees.
Our work shows that, while efficient in theory, traditional hash trees still incur substantial overheads in real systems. This motivates our search for a more efficient tree structure and ultimately the design of DMTs.

\section{Conclusion}
\label{sec:conclusion}

\begin{figure}[!t]
    \centering
    \includegraphics[width=0.45\textwidth]{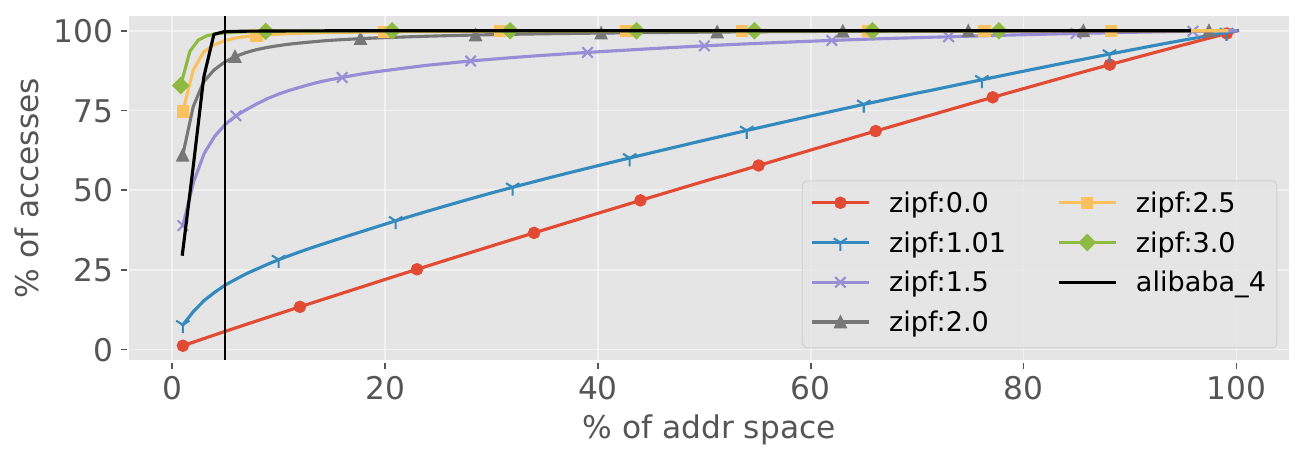}
    \caption{Workload distributions.}
    \label{fig:dist}
\end{figure}

Merkle hash trees provide robust integrity protections over untrusted storage, but they can severely degrade performance. We performed a comprehensive analysis of performance overheads, demonstrated the root cause, and designed an optimized tree structure called DMTs that advance the state of the art. 
DMTs demonstrate the power of integrity structures that exploit workload patterns to reduce overheads. \newtxt{Our code is open-sourced at \href{https://github.com/MadSP-McDaniel/dmt}{https://github.com/MadSP-McDaniel/dmt}}. \oldtxt{Our code is plug-and-play into standard Linux systems and open-sourced at [Anonymized link].}
\section*{Acknowledgments}
This work was supported in part by the Semiconductor Research Corporation (SRC) and DARPA.

\bibliography{main}

\appendix

\ifrevision\input{revision}\fi
\end{document}